\newtheorem{theorem}{Theorem} 
\newtheorem{lemma}[theorem]{Lemma}
\newcommand{\mm}{\Pi}
\newcommand{\region}[1]{\mathsf{region}(#1)}
\newcommand{\xregion}[1]{\ensuremath{\mathsf{region}_x(#1)}}
\newcommand{\lmm}{\ensuremath{\overline{\Pi}}}
\newcommand{\findcc}{\ensuremath{\textsc{FindCC}(\Pi)}}
\newcommand{\locatecc}{\ensuremath{\textsc{LocateCC}}}
\newcommand{\ds}{\ensuremath{\mathcal{D}_s}}
\newcommand{\lseg}{\ensuremath{\mathcal{L}}}
\newcommand{\rseg}{\ensuremath{\mathcal{R}}}
\newcommand{\mseg}{\ensuremath{\mathcal{M}}}
\newcommand{\winner}{\ensuremath{\mathcal{L}_w}}
\newcommand{\cc}{\ensuremath{\textsc{cc}(\Pi)}}
\newcommand{\incident}{\ensuremath{\textsc{Incident}(\Pi)}}
\newcommand{\outerbd}{\ensuremath{\textsc{OuterBD}(\Pi)}}
\newcommand{\bb}{\ensuremath{\mathcal{B}}}
\newbox\ProofSym \setbox\ProofSym=\hbox{%
	\unitlength=0.18ex%
	\begin{picture}(10,10) \put(0,0){\framebox(9,9){}}
	\put(0,3){\framebox(6,6){}}
	\end{picture}}
\title{Point Location in Incremental Planar Subdivisions%
}
\author{Eunjin Oh\thanks{Max Planck Institute for Informatics, Saarbr\"ucken, Germany, Email: \texttt{eoh@mpi-inf.mpg.de}}}
\begin{document}
	\date{}
	\maketitle

\begin{abstract}
We study the point location problem in incremental (possibly disconnected) planar subdivisions, that is, dynamic subdivisions allowing insertions of edges and vertices only. 
Specifically, we present an $O(n\log n)$-space data structure for this problem
that supports queries in $O(\log^2 n)$ time and updates in $O(\log n\log\log n)$ amortized time. This is the first result that achieves polylogarithmic query and update times simultaneously in incremental (possibly disconnected) planar subdivisions.
Its update time is significantly faster than the update time of the best known
data structure for fully-dynamic (possibly disconnected) planar subdivisions. 
\end{abstract}

\section{Introduction}
Given a planar subdivision, a point location query asks for finding the face 
of the subdivision containing a given query point.
The planar subdivisions for point location queries are induced
by planar embeddings of graphs. A planar subdivision consists of faces,
edges and vertices whose union coincides with the whole plane. 
An edge of a subdivision is considered to be
open, that is, it does not include its endpoints (vertices). A face of
a subdivision is a maximal connected subset of the plane that does
not contain any point on an edge or a vertex.
The boundary of a face of a subdivision may consist of several connected
components. 
Imagine that we give a direction to each edge on the boundary of a face $F$ 
so that $F$ lies to the left of it. (If an edge is incident to $F$ only, we consider it as two edges with opposite directions.)  
We call a boundary component of $F$ the
\emph{outer boundary} of $F$ if it is traversed
in counterclockwise order around $F$. 
Every bounded face has exactly one outer boundary. We call a 
connected component other than the outer boundary an \emph{inner
boundary} of $F$.

We say a planar subdivision is \emph{dynamic} if the subdivision changes
dynamically by insertions and deletions of edges and vertices.
A dynamic planar subdivision is \emph{connected} if 
the underlying graph is connected at any time. 
In other words, the boundary of each face is connected at any time.
We say a dynamic planar subdivision is \emph{general} if it is not necessarily
connected. 
There are three versions of dynamic planar subdivisions with respect to
the update operations they support: incremental,
decremental and fully-dynamic. An incremental subdivision
allows only insertions of edges and vertices, and a decremental
subdivision allows only deletions of edges and vertices. 
A fully-dynamic subdivision allows both of them.

The dynamic point location problem is closely related to the dynamic 
vertical ray shooting problem in the case of connected subdivisions~\cite{Cheng-NewResults-1992}.  In this 
problem, we are asked to find the edge of a dynamic 
planar subdivision that lies immediately above a query point. 
The boundary of each face in a dynamic connected subdivision is connected,
so one can maintain the boundary of each face efficiently using a concatenable queue.
Then one can
answer a point location query without increasing the space and time
complexities using a data structure for the dynamic vertical ray shooting
problem~\cite{Cheng-NewResults-1992}.

However, it is not the case in general  planar
subdivisions.  Although the dynamic vertical ray shooting data structures 
presented
in~\cite{ABG-Improved-2006,BJM-Dynamic-1994,cn-locatoin-2015,Cheng-NewResults-1992}
work for general  subdivisions, it is unclear how one can
use them to support point location queries efficiently.  As pointed
out in some previous
works~\cite{cn-locatoin-2015,Cheng-NewResults-1992}, a main issue
concerns how to test for any two edges if they belong to the boundary of the same face in the subdivision. 
This is because the boundary of a face may
consist of more than one connected component.


\paragraph{Previous work.}
There are several data structures for the point location problem
in \emph{fully-dynamic} planar \emph{connected} subdivisions~\cite{ABG-Improved-2006,BJM-Dynamic-1994,cn-locatoin-2015,Cheng-NewResults-1992, CPT-unified-1996,CT-monotone-1992,GT-dynamictree-1998,PT-monotone-1989}.
None of the known results for
this problem is superior to the others, and optimal update and
query times are not known. 
The latest result was given by Chan and
Nekrich~\cite{cn-locatoin-2015}. The linear-size data structure by Chan and
Nekrich~\cite{cn-locatoin-2015} supports $O(\log n(\log \log n)^2)$
query time and $O(\log n \log \log n)$ update time in the pointer
machine model, where $n$ is the number of the edges of the current subdivision.
Some of them~\cite{ABG-Improved-2006,BJM-Dynamic-1994,cn-locatoin-2015,Cheng-NewResults-1992} including the result by Chan and Neckrich can be used for 
answering vertical ray shooting queries without increasing the running
time. 

There are data structures 
for answering point location queries more efficiently
in \emph{incremental} planar \emph{connected} subdivisions in the pointer machine model~\cite{ABG-Improved-2006, GT-dynamictree-1998, IMAI19871}. The best known data structure supports
$O(\log n\log^* n)$ query time and $O(\log n)$ amortized update time~\cite{ABG-Improved-2006} and has linear size. 
This data structure can be modified
to support $O(\log n)$ query time and $O(\log^{1+\epsilon} n)$ amortized update time
for any $\epsilon>0$.
In the case that every cell is monotone at any time,
there is a linear-size data structure supporting $O(\log n\log\log n)$ query time and $O(1)$ amortized update time~\cite{GT-dynamictree-1998}.

On the other hand,
little has been known about this problem in \emph{fully-dynamic}
planar \emph{general} subdivisions, which was recently mentioned by Snoeyink~\cite{handbook}. 
Very recently, Oh and Ahn~\cite{Oh-2018} presented a linear-size data structure
for answering point location queries in $O(\log n(\log\log n)^2)$ time with $O(\sqrt{n}\log n(\log\log n)^{3/2})$ amortized update time. In fact, this is the only data structure
known for answering point location queries in general dynamic planar subdivisions.
In the same paper, the authors also considered the point location problem
in decremental general subdivisions. They presented a linear-size data structure
supporting $O(\log n)$ query time and $O(\alpha(n))$ update time, where $n$
is the number of the edges in the initial subdivision and $\alpha(n)$ is
the inverse Ackermann function. 

\paragraph{Our result.}
In this paper, we present a data structure for answering
point location queries in incremental  general planar subdivisions in the pointer machine model.
The data structure supports $O(\log^2 n)$ query time
and $O(\log n\log\log n)$ amortized update time, where $n$ is the number
of the edges at the current subdivision. The size of the data structure is 
$O(n\log n)$. 
This is the first result on the point location problem specialized in incremental general 
planar  subdivisions. The update time of this data structure is significantly faster than the update time of the data structure
in fully-dynamic planar general subdivisions in~\cite{Oh-2018}.

\paragraph{Comparison to the decremental case.} 
In decremental general subdivisions, there is a simple and efficient data structure
for point location queries~\cite{Oh-2018}. This data structure maintains
the decremental subdivision explicitly: for each face $F$ of the subdivision, it 
maintains a number of  concatenable queues each of which
stores the edges of each connected component of the boundary of $F$. 
When an edge is removed, two faces might be merged into one face, but
no face is subdivided into two faces. Using this property, they 
 maintain a disjoint-set data structure for each face such that
an element of the disjoint-set data structure is the name of a concatenable queue
representing a connected component of the boundary of this face.

In contrast to decremental subdivisions, it is unclear how to maintain an incremental
subdivision explicitly. Suppose that a face 
$F$ is subdivided into two faces $F_1$ and $F_2$ by the insertion of an
edge $e$. See Figure~\ref{fig:inc-dec}(a). An inner boundary of $F$ becomes
an inner boundary of either $F_1$ or $F_2$ after $e$ is inserted.
It is unclear how to update the set of the inner boundaries of $F_i$ for $i=1,2$  
without accessing every queue representing an inner boundary of $F$.
If we access all such concatenable queues, the total insertion time for $n$
insert operations is 
$\Omega(n^2)$ in the worst case. 
Therefore it does not seem that the approach in~\cite{Oh-2018} works for incremental
subdivisions.

\begin{figure}
  \begin{center}
    \includegraphics[width=0.7\textwidth]{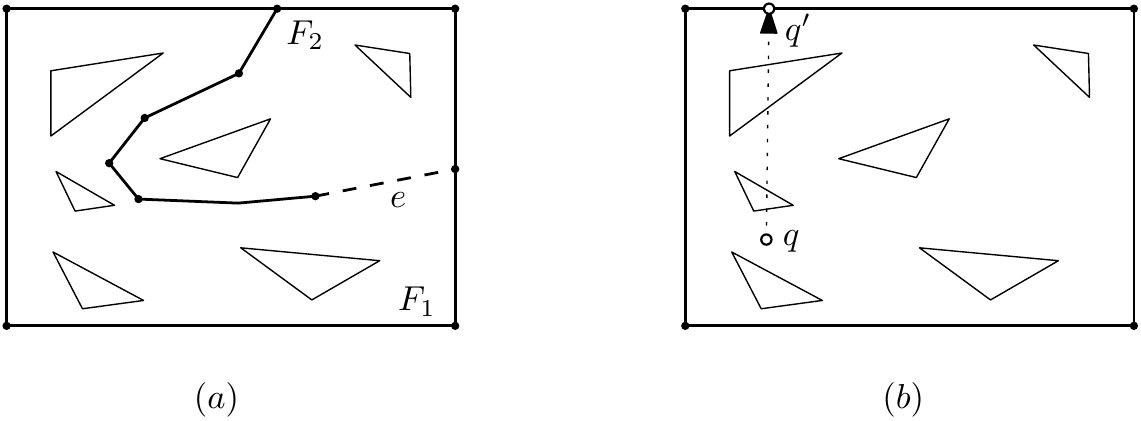}
    \caption{\small (a) The insertion of $e$ makes the face subdivided into
    	two subfaces $F_1$ and $F_2$. Then $F_1$ has four inner boundaries,
    	and $F_2$ has two inner boundaries. (b) Given a query point $q$,
    	imagine that we shoot the upward vertical ray from $q$
    	which penetrates inner boundaries not containing $q$ 
    	until it hits the outer
    	boundary of some face. Once we obtain the point $q'$ where the ray reaches, we can find the face
    	containing $q$ efficiently.
    	\label{fig:inc-dec}}
  \end{center}
\end{figure}

\paragraph{Outline.}
Instead of maintaining the whole subdivision
(i.e., all boundary components for each face) explicitly, we maintain the outer boundary of a face only using a concatenable queue.
We define the name of each face to be the name of the concatenable queue 
representing the outer boundary of the face. 
Thus if we have an outer boundary edge of the face containing a query point $q$,
we can return the name of the face immediately.
Note that, in connected subdivisions, the edge lying immediately above $q$   is such an edge. However it is not the case in general subdivisions.
In our query algorithm, we shoot a vertical upward ray from $q$ 
which penetrates boundary components not containing $q$ in their interiors 
until it hits the outer boundary of a face $F$.  
Observe that $F$ contains $q$. See Figure~\ref{fig:inc-dec}(b). Then we can return the name of $F$. Specifically, our two-step query algorithm works as follows.

First, we find the connected component $\gamma$ of the underlying graph of the current subdivision $\Pi$ 
which contains the outer boundary of the face containing the query point.
To do this, we reduce the problem into a variant of the stabbing query problem, which
we call the \emph{stabbing-lowest query problem} for trapezoids.
Consider the vertical decomposition
of the subdivision induced by each connected component of the underlying graph of $\Pi$. There are $O(n)$ cells of the 
vertical decompositions for every connected component of $\Pi$ in total, where
$n$ is the number of the edges in $\Pi$.
Let $\Box$ be the cell (trapezoid) whose upper side lies immediately above the query point among the cells containing the query point.
Then the connected component from which $\Box$ comes contains the outer boundary
of the face containing the query point.
However, it takes $\Omega(n)$ time to update the vertical subdivisions
for each edge insertion in the worst case. We present an alternative way to obtain trapezoids satisfying this property and allowing efficient update time.

Second, we find the face of the subdivision $\Pi_\gamma$ induced by $\gamma$.
Note that  $\Pi_\gamma$ is connected. 
Also, the boundary of the face of $\Pi_\gamma$ containing the query point 
coincides with the outer boundary of the face $F$ of $\Pi$ containing the query point.
Therefore, we can find the name of $F$ by applying a point location query on $\Pi_\gamma$.
To do this, we maintain a point location data structure on $\Pi_{\gamma'}$ for each connected component $\gamma'$. Note that two connected components might be merged
into one. To resolve this issue, we present a new data structure supporting an efficient  merge operation, which is a variant of the dynamic data structure by Arge et al.~\cite{ABG-Improved-2006}.

\section{Preliminaries}\label{sec:preliminary}
Consider an incremental planar subdivision $\Pi$.
We use $\lmm$ to denote the union of the edges and vertices of $\mm$.
We require that
every edge of $\Pi$ be a straight line segment.  
For a set $A$ of elements (points or edges), we use $|A|$ to denote the number of the
elements in $A$. For a planar subdivision $\Pi'$, we use $|\Pi'|$ to denote the 
complexity of $\Pi'$, i.e., the number of the edges of $\Pi'$.
We use $n$ to denote the number of the edges of $\mm$ at the moment. 
For a connected component $\gamma$ of $\lmm$, we use $\Pi_\gamma$ to denote
the subdivision induced by $\gamma$. Notice that it is connected.

In this problem, we are to process a mixed sequence of $n$ edge insertions and
vertex insertions
so that given a query point $q$ the face of the current subdivision 
containing $q$ can be computed efficiently.
More specifically, each face  in the subdivision is assigned a distinct
name, and given a query point
the name of the face containing the point is to be reported.
For the insertion of an edge $e$, we require $e$ to intersect no edge or
vertex in the current subdivision. Also, an endpoint of $e$ is required to lie on a
face or a vertex of the subdivision. We insert the endpoints of $e$ in the subdivision
as vertices if they were not vertices of the subdivision. 
For the insertion of a vertex $v$, it is required to lie on an edge or a face of the current subdivision. If it lies on an edge, the edge is split into two
(sub)edges whose common endpoint is $v$. 

\subsection{Tools}
In this subsection, we introduce several tools we use
in this paper. 
A \emph{concatenable queue} represents a sequence of elements, and allows five  operations: insert an 
element, delete an element, search an element, split a sequence into two subsequences,
and concatenate two sequences into
one. By implementing it with a 2-3 tree, we can support
each operation in $O(\log N)$ time, where $N$ is the number of
elements at the moment. 

The \emph{vertical decomposition} of a (static) planar subdivision $\Pi_s$
is a finer subdivision of $\Pi_s$ by adding a number of vertical line segments.
For each vertex $v$ of $\Pi_s$, consider two vertical extensions
from $v$, one going upwards and one
going downwards. The extensions stop when they meet an edge
of $\Pi_s$ other than the edges incident to $v$. 
The vertical decomposition of $\Pi_s$ is the subdivision
induced by the vertical extensions contained in 
the \emph{bounded} faces of $\Pi_s$ together with the edges of $\Pi_s$. Note that the unbounded face of $\Pi_s$ remains the same in the vertical decomposition.
In this paper, we do not consider the unbounded face of $\Pi_s$
as a cell of the vertical decomposition. Therefore, every cell is
a trapezoid or a triangle (a degenerate trapezoid). 
There are $O(|\Pi_s|)$ trapezoids in the vertical decomposition of $\Pi_s$. We treat each trapezoid as a closed set.
We can compute the vertical decomposition 
in $O(|\Pi_s|)$ time~\cite{triangulation} since we decompose the bounded faces only. 

We use segment trees, interval trees and priority search trees as basic building
blocks of our data structures. In the following, we briefly review those trees.
For more information, refer to~\cite[Section 10]{CGbook}.

\paragraph{Segment and interval trees.}
We first introduce 
the segment and the interval trees on a set $\mathcal{I}$ 
of $n$ intervals on the $x$-axis.
Let $\mathcal{I}_p$ be the set of the endpoints of the intervals of $\mathcal{I}$. 
The base structure of the segment and interval trees is a binary search tree on $\mathcal{I}_p$
of height $O(\log n)$ such that each leaf node corresponds to exactly one point of $\mathcal{I}_p$.
Each internal node $v$ corresponds to a point $\ell(v)$ on the $x$-axis  
and an interval $\region{v}$ on the $x$-axis such that $\ell(v)$ is 
the midpoint of $\mathcal{I}_p\cap \region{v}$. For the root $v$, $\region{v}$ is defined as the $x$-axis. 
Suppose that $\ell(v)$ and $\region{v}$ are 
defined for a node $v$. For its two children $v_\ell$ and $v_r$, $\region{v_\ell}$ and $\region{v_r}$ are the left and right
regions, respectively, in the subdivision of $\region{v}$ 
induced by $\ell(v)$.

For the interval tree, each interval $I\in\mathcal{I}$ is stored in exactly one node: the node $v$ of maximum depth such that $\region{v}$ contains $I$. 
In other words, it is stored in the lowest common ancestor of two leaf nodes
corresponding to the endpoints of $I$.
For the segment tree, each interval $I$ is stored in $O(\log n)$ nodes:
the nodes $v$ such that $\region{v}\subseteq I$, but $\region{u}\not\subseteq I$ 
for the parent $u$ of $v$. 
For any point $p$ on the $x$-axis, let $\pi(p)$ be the 
search path of $p$ in the base tree.
Each interval of $\mathcal{I}$ containing $p$ is stored in some nodes of $\pi(p)$  in both trees. However, not every interval stored in the nodes of $\pi(p)$ contains $p$ in the interval tree 
while every interval stored in the nodes of $\pi(p)$ contains $p$ in the segment tree.

Similarly, the segment tree and the interval tree on a set $\mathcal{S}$
of $n$ line segments in the plane are defined as follows.
Let $\mathcal{S}_x$ be the set of the projections of the line segments
of $\mathcal{S}$ onto the $x$-axis. 
The segment and interval trees of $\mathcal{S}$ are 
basically the segment and interval trees on $\mathcal{S}_x$, respectively.
The only difference is that instead of storing the projections, 
we store a line segment of $\mathcal{S}$ in the nodes where
its projection is stored in the case of $\mathcal{S}_x$.
As a result, $\ell_x(v)$ and $\xregion{v}$ for the trees of $\mathcal{S}$
 are naturally defined as the vertical line containing
 $\ell(v)$ and the smallest vertical slab containing $\region{v}$ for the
 trees of $\mathcal{S}_x$, respectively.
 If it is clear in context, we use $\ell(v)$ and $\region{v}$
 to denote $\ell_x(v)$ and $\xregion{v}$, respectively.

\paragraph{Interval tree with larger fan-out.}
To speed up updates and queries, we use an interval tree with
larger fan-out $f\geq 2$ for storing the intervals of $\mathcal{I}$. 
As the binary case mentioned above, it is naturally extended to the one
for line segments in the plane. 
The base tree is a balanced search tree of $\mathcal{I}_p$ with fan-out $f$,
which has height of $O(\log n/\log f)$.
Then each node $v$ of the base tree has at most $f$ children $u_1,\ldots,u_{f'}$ 
 and has an interval $\region{v}$ satisfying that 
 the left and right endpoints of $\region{u_i}$ are the $(i-1)$th
 and $i$th $f'$-quantile of
 $\mathcal{I}_p\cap \region{v}$, respectively, 
  for $1\leq i\leq f'$.

Each node $v$ of the base tree has three sets $\lseg(v)$, $\rseg(v)$ and $\mseg(v)$
of intervals of $\mathcal{I}$. 
An interval $I\in\mathcal{I}$ is stored in at most three nodes as follows. 
Let $v$ be the node of maximum depth such that $\region{v}$ contains $I$. 
Let $u_1$ and $u_2$ be the children of $v$ such that $\region{u_1}$ contains 
the left endpoint of $I$ and $\region{u_2}$ contains the right endpoint of $I$.
We store $I$ in $\lseg(u_1)$, $\rseg(u_2)$ and $\mseg(v)$.
Precisely, we store $I\cap \region{u_1}$ in $\lseg(u_1)$,  $I\cap \region{u_2}$
in $\rseg(u_2)$, and the remaining piece of $I$ in $\mseg(v)$.
Then every piece stored in $\lseg(v)$ (and $\rseg(v)$) has a common endpoint.
For the pieces stored in $\mseg(v)$, their endpoints have at most $f$ distinct
$x$-coordinates. We will make use of these properties to speed up updates and queries
in Section~\ref{sec:D2} and Section~\ref{sec:stabbing}.
In the following, to make the description easier, we do not distinguish
a piece stored in a set and the line segment of $\mathcal{I}$ from which the piece
comes.

\paragraph{Priority search tree.}
Suppose that we are given a set $\mathcal{S}_\ell$ of $n$ line segments in the plane
having their left endpoints on a common vertical line $\ell$.
Such edges can be sorted in $y$-order: from top to bottom with respect
to their endpoints on $\ell$.
The \emph{priority search tree} can be used to answer 
vertical ray shooting queries efficiently in this case. 
The base tree is a binary search tree of height $O(\log n)$ on
the endpoints of the line segments 
of $\mathcal{S}_\ell$ on $\ell$.
Each line segment corresponds to a leaf node of the base tree. 
Each node $v$ stores the $x$-coordinate of 
the right endpoint of the line segment with rightmost right endpoint
as its key among all line segments
corresponding to the leaf nodes of the subtree rooted at $v$.
Cheng and Janardan~\cite{Cheng-NewResults-1992} showed that
a vertical ray shooting query can be answered in time linear in the height of the base tree
by traversing two paths from the root to leaf nodes.
	 
In our problem, 
an advantage for using the priority search tree is that it can be constructed in
linear time if the line segments of $\mathcal{S}_\ell$ are sorted
with respect to their $y$-order. To see this, observe that the base tree
can be constructed in linear time in this case. Then we compute the key
for each node of the base tree in a bottom-up fashion.  
Using this property, we can merge two priority search trees efficiently.

\subsection{Subproblem: Stabbing-Lowest Query Problem for Trapezoids} 
The trapezoids we consider in this paper 
have two sides parallel to the $y$-axis unless otherwise stated.
We consider the incremental \emph{stabbing-lowest query problem} for trapezoids
as a subproblem. In this problem, we are given
a set $\mathcal{T}$ of trapezoids which is initially empty and changes dynamically
by insertions of trapezoids.
Here, the trapezoids we are given satisfy that no two upper or lower sides
of the trapezoids cross each other. But it is possible that the upper (or lower) side of one trapezoid crosses a vertical side of another trapezoid.
 We process a sequence of updates 
for the following task.  Given a query point $q$, 
the task is to find the trapezoid whose upper side lies immediately above $q$
among all trapezoids of $\mathcal{T}$ containing $q$. We call such a trapezoid the
\emph{lowest trapezoid stabbed by $q$}.

%
In Section~\ref{sec:stabbing}, we present a data structure
for this problem 
in the case that only insertions are allowed.
The worst case query time is $O(\log^2 n)$, the amortized update time is $O(\log n\log\log n)$,
and the size of the data structures is $O(n\log n)$. 
We will use this data structure as a black box in Section~\ref{sec:main}.

\section{Point Location in Incremental General Planar Subdivisions}\label{sec:main}
Compared to connected subdivisions,
a main difficulty for handling dynamic general planar subdivisions lies
in finding the faces incident to the edge $e$ lying immediately
above a query point~\cite{Cheng-NewResults-1992}.  
If $e$ is contained in the outer boundary of a face, we can find
the face as the algorithm in~\cite{Cheng-NewResults-1992} for connected planar subdivisions does.
However, this approach does not work if $e$ lies on an inner boundary of a face. 
To overcome this difficulty, instead of finding the edge in $\mm$ lying immediately above a query point $q$, we find an outer boundary edge of 
the face $F$ of $\mm$ containing $q$.
To do this, we answer a point location query in two steps.

First, we find the (maximal) connected component $\gamma$ 
of $\lmm$ containing an outer boundary edge of $F$.
We use $\findcc$ to denote this data structure. 
 We observe that
the outer boundary of the face of $\Pi_\gamma$ containing $q$  
coincides with the outer boundary of $F$. 
We maintain the outer boundary of each face in a concatenable queue.
Thus given an outer boundary edge of $F$, we can return
the name of $F$ by defining  the name of each face of $\mm$ as the name of the concatenable queue representing its  
outer boundary.

Second, we apply a point location query on $\Pi_\gamma$.
More specifically, we find the face $F_\gamma$ in $\Pi_\gamma$ containing $q$, find the concatenable queue representing the outer boundary of $F_\gamma$,
and return its name.
Since $\Pi_\gamma$ is connected, we can maintain an efficient data structure 
for point location queries on $\Pi_\gamma$.
We use $\locatecc(\gamma)$ to denote this data structure. 
Each of Sections~\ref{sec:D1} and~\ref{sec:D2}
describes each of 
the two data structures together with query and update algorithms. 

In addition to them, we maintain the following data structures:
$\incident$ for
checking if a new edge is incident to $\lmm$,
and $\cc$ for maintaining the connected components of $\lmm$,
and $\outerbd$ for maintaining the outer boundary of each face of $\mm$.
The update times for these structures are subsumed by the total update time.

\paragraph{$\incident$: For checking if a new edge is incident to $\lmm$.}
To check if a new edge $e$ is incident to a connected component of $\lmm$, we 
maintain a balanced binary search tree on the vertices of $\mm$ 
in the lexicographical order with respect to their $x$-coordinates and then their $y$-coordinates.  Also, for each vertex of $\mm$, we maintain
a balanced binary search tree on the edges incident to it in clockwise order
around the vertex. When an edge or a vertex is inserted, we can update these data structures
in $O(\log n)$ time. Since each endpoint of $e$ lies on a vertex
of $\mm$ or in a face of $\mm$, we can check if $e$ is incident to a connected component of $\lmm$ in $O(\log n)$ time. 

An edge $e$ is stored in two balanced binary search trees: each for its endpoint.
We make the elements in the trees corresponding to $e$ point to each other.
Also, we make an element in each balanced binary search tree point to its successor
and predecessor. In this way, we can traverse the edges of the outer boundary of a face of $\Pi$ from a given edge in clockwise order in time linear in the number of the edges.

\paragraph{$\cc$: For maintaining the connected components of $\lmm$.}
We maintain each connected component of $\lmm$ using
a disjoint-set data structure~\cite{Tarjan-1975}.
A disjoint-set data structure keeps track of
a set of elements partitioned into a number of
disjoint subsets. It has size linear in the total number of
elements, and can be used to check if two elements
are in the same partition and to merge two partitions into one.
Both operations can be done in $O(\alpha(N))$ time, where $N$ is the
number of elements at the moment and $\alpha(\cdot)$
is the inverse Ackermann function.
In our case, we store the edges of $\mm$ 
to a disjoint-set data structure, and we say that
two edges are in the same partition
if and only if they are in the same connected component of $\lmm$.
In this way, we can check if two edges are in the same connected component 
of $\lmm$ in $O(\alpha(n))$ time. The update time for each
insertion is $O(\log n)$ since we need to find the connected components incident
to the new edge using $\incident$.

\paragraph{$\outerbd$: For maintaining the outer boundary of each face of $\mm$.}
We maintain concatenable queues each of which represents
the outer boundary of a face of $\mm$. Also, we maintain a set $\mathcal{E}$ of the edges of $\mm$, and let an edge $e$ of $\mm$ point to its corresponding element in (at most two) concatenable queues 
so that we can return the name of each concatenable queue which $e$ belongs to in constant time once we have the pointer pointing to the element  in $\mathcal{E}$ corresponding to $e$.

There are only two cases that the outer boundary of a face changes
by the insertion of a new edge $e$: (1) both endpoints of $e$ are contained in the same connected component of $\lmm$, or (2) they are contained in distinct connected components of $\lmm$. See Figure~\ref{fig:concatenable}.  Using $\cc$ and $\incident$, we can check if the insertion of an edge $e$ belongs to each of the cases in $O(\log n)$ time.
 Let $F$ be the face containing $e$. 
 
\begin{figure}
	\begin{center}
		\includegraphics[width=0.7\textwidth]{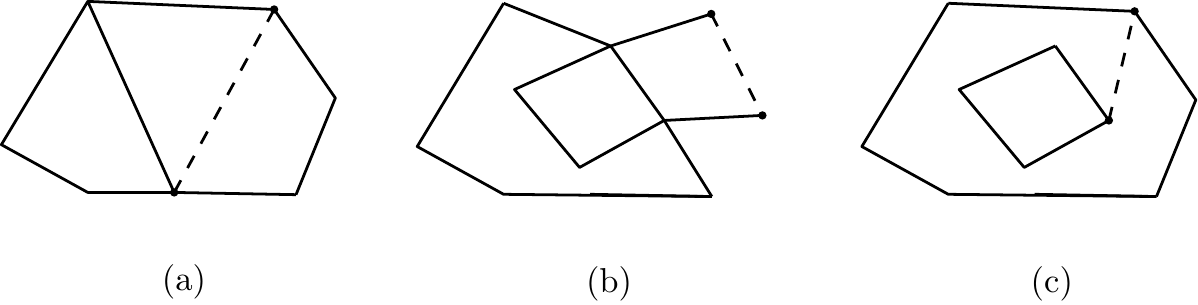}
		\caption{\small The dashed line segment is $e$. (a) Case~(1). 
			We split a concatenable queue into two concatenable queues.
			(b) Case~(1). We create a new concatenable queue.
			(c) Case~(2). We insert the five edges including $e$ into
			a concatenable queue.
			\label{fig:concatenable}}
	\end{center}
\end{figure}

Consider Case~(1). 
 We check if the endpoints of $e$ lie on the outer boundary of $F$ by finding the edges incident to each endpoint of $e$ that comes before and after $e$ around $v$ using $\cc$ and $\incident$.
If so, the face $F$ is subdivided into two faces. See Figure~\ref{fig:concatenable}(a). We split the concatenable queue for $F$ into two queues in $O(\log n)$ time. Otherwise, a new face containing $e$ on its outer boundary appears. See Figure~\ref{fig:concatenable}(b). Then we trace the inner boundary of $F$ incident to $e$ 
in time linear in its size using $\incident$, and make a new concatenable queue for this face. This takes $(N\log n)$ time, where $N$ is the size of the outer boundary of the new face.

Consider Case~(2). In this case, using $\cc$ and $\incident$, we check if one of the endpoints of $e$ is contained in the outer boundary of $F$, and the other is contained in an inner boundary of $F$. This is the only case that a new face appears. See Figure~\ref{fig:concatenable}(c). If so, the new face, which is $F\setminus e$, has the outer boundary which is the union of the outer boundary of $F$, the inner boundary of $F$ incident to $e$, and $e$. Then we trace such an inner boundary of $F$ in time linear in its size, and insert them the concatenable queue for $F$ one by one, and then insert $e$. This takes $(N\log n)$ time, where $N$ is the size of the inner boundary of $F$ incident to $e$.

The total time for maintaining the concatenable queues is $O(n\log n)$.
This is because each edge $e$ of $\mm$ is inserted to some concatenable queues
at most twice. Consider any two faces $F_1$ and $F_2$ containing $e$ on their outer boundaries and lying locally below $e$ which appear in the course of updates. Assume that $F_1$ appears before $F_2$ appears. This means that $e$ has become an outer boundary edge of $F_2$ by a series of splits of faces from $F_1$. In the course of these splits, the concatenable queues change only by the split operation, which takes $O(\log n)$ time per edge insertion. Therefore, the amortized time for maintaining the concatenable queues is $O(\log n)$.

\subsection{$\findcc$: Finding One Connected Component for a Query Point}\label{sec:D1}
We construct a data structure for finding the (maximal) connected component $\gamma_q$
of $\lmm$ containing the outer boundary of the face of $\Pi$ containing a query point $q$. 
To do this, we compute a set $\mathcal{T}$ of $O(n)$ trapezoids each of which \emph{belongs} to exactly one edge of $\mm$ such that
the edge to which the lowest trapezoid stabbed by $q$ belongs is contained in $\gamma_q$.
Then we construct the stabbing-lowest data structure on $\mathcal{T}$ described in
Section~\ref{sec:stabbing}.

\begin{figure}
	\begin{center}
		\includegraphics[width=0.8\textwidth]{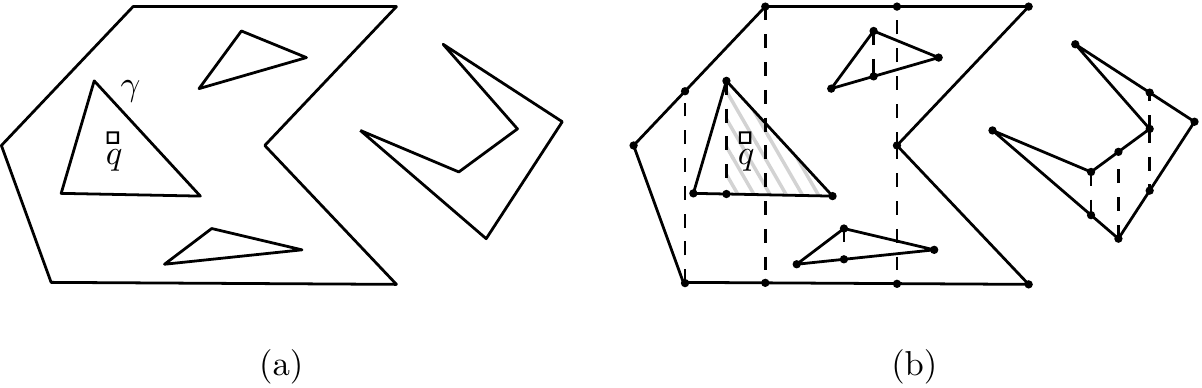}
		\caption{\small (a) The component $\gamma$ contains
			the outer boundary of the face containing $q$.
			(b) Using the vertical decomposition, we obtain $O(n)$ (possibly intersecting) trapezoids.
			Their corners are marked with disks.
			The lowest trapezoid stabbed by $q$ is the dashed one, which comes from $\gamma$.
			\label{fig:vertical-decompose}}
	\end{center}
\end{figure}

\subsubsection{Data Structure and Query Algorithm}
For each connected component $\gamma$ of $\lmm$,
consider the subdivision $\Pi_\gamma$ induced by $\gamma$.
Notice that $\Pi_\gamma$ is connected. Let $U(\gamma)$ be the union of the closures 
of all bounded faces of $\Pi_\gamma$.  Note that it might be 
disconnected and contain an edge of $\gamma$ in its interior.  
Imagine that we have the cells (trapezoids) of the vertical decomposition of $U(\gamma)$. 
We say that a cell \emph{belongs} to the edge of $\gamma$ containing the
upper side of the cell. 
Let $\mathcal{T}_\gamma$ be the set of the cells (trapezoids) for $\gamma$,
and $\mathcal{T}$ be the union of $\mathcal{T}_\gamma$ 
for every connected component $\gamma$ of $\lmm$.
See Figure~\ref{fig:vertical-decompose}. 
We will show in Lemma~\ref{lem:subproblem} that a generalized version of the following statement holds: 
the lowest trapezoid in $\mathcal{T}$ stabbed by a query point $q$
belongs to an edge of $\gamma_q$. If no trapezoid in $\mathcal{T}$ contains $q$, 
the query point is contained in the unbounded face of $\Pi$.

However, each edge insertion may induce 
$\Omega(n)$ changes on $\mathcal{T}$ in the worst case.
For an efficient update procedure, we define and
construct the trapezoid set $\mathcal{T}_\gamma$ in a slightly different way by allowing some edges lying inside $U(\gamma)$ to define
trapezoids in $\mathcal{T}_\gamma$. 
	For a connected component $\gamma$ of $\lmm$, we say a set of connected subdivisions induced by edges of $\gamma$ 
	\emph{covers} $\gamma$ if an edge of $\gamma$ is contained in at most two subdivisions, and
	one of the subdivisions contains all edges of the boundary of $U(\gamma)$.
Let $\mathcal{F}_\gamma$ be a set of connected subdivisions covering $\gamma$.   See Figure~\ref{fig:vertical-decompose-finer}. 
Notice that $\mathcal{F}_\gamma$ is not necessarily unique. 
For a technical reason, if the union of some edges (including their endpoints) in a subdivision of $\mathcal{F}_\gamma$  forms
a line segment, we treat them as one edge.
Then we let $\mathcal{T}_\gamma$ be the set of the cells of the 
vertical decompositions of the subdivisions in $\mathcal{F}_\gamma$. 
We say that a cell (trapezoid) of $\mathcal{T}_\gamma$
\emph{belongs} to the edge of $\gamma$ containing the upper side of the cell.
Let $\mathcal{T}$ be the union of all such sets $\mathcal{T}_\gamma$.

The following lemma shows that 
the lowest trapezoid in $\mathcal{T}$ stabbed by $q$
belongs to an edge of $\gamma_q$.
Thus by constructing a stabbing-lowest data structure on $\mathcal{T}$, 
we can find $\gamma_q$ in $O(Q(n))$ time, 
where $Q(n)$ is the 
query time for answering a stabbing-lowest query.
The query time of the stabbing-lowest data structure on $n$ trapezoids 
described in Section~\ref{sec:stabbing} is $O(\log^2 n)$.

\begin{lemma}\label{lem:subproblem}
	The lowest trapezoid in $\mathcal{T}$ stabbed by a query point $q$
	belongs to an edge of the connected component of $\lmm$ containing the outer boundary
	of the face of $\mm$ containing $q$. If the face of $\mm$ containing $q$
	is unbounded, no trapezoid in $\mathcal{T}$ contains $q$.
\end{lemma}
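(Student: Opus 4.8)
The plan is to prove the two assertions separately. Both rest on the routine observation that every trapezoid of $\mathcal{T}$ lies in $U(\gamma)$ for the component $\gamma$ it comes from: such a trapezoid is a cell of the vertical decomposition of some subdivision $S\in\mathcal{F}_\gamma$, hence lies in a bounded face of $S$; since $S$ is induced by edges of $\gamma$ and $\Pi_\gamma$ refines it, that bounded face is a union of bounded faces of $\Pi_\gamma$, so the trapezoid lies in $U(\gamma)$. In particular, if a trapezoid of $\mathcal{T}$ is stabbed by $q$ then $q\in U(\gamma)$ for its component $\gamma$. Hence if the face $F$ of $\mm$ containing $q$ is unbounded, then for every component $\gamma$ the face of $\Pi_\gamma$ containing $q$ contains $F$ (as $\mm$ refines $\Pi_\gamma$) and is therefore unbounded, so $q\notin U(\gamma)$; consequently no trapezoid of $\mathcal{T}$ contains $q$, which is the second statement of the lemma.

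Now assume $F$ is bounded, let $\partial_o F$ denote its outer boundary --- a nonempty set of edges of $\gamma_q$ enclosing $q$ --- and let $F'$ be the face of $\Pi_{\gamma_q}$ containing $q$. Since $F'$ is connected, disjoint from $\partial_o F$, and contains $q$, it lies in the region bounded by $\partial_o F$, so $F'$ is bounded and $q\in U(\gamma_q)$. Write $h$ for the height at which the upward vertical ray from $q$ first meets $\partial U(\gamma_q)$. I would prove two estimates. First, $\mathcal{T}_{\gamma_q}$ contains a trapezoid stabbed by $q$ whose upper side is at height at most $h$: in the subdivision $S_0\in\mathcal{F}_{\gamma_q}$ that contains all edges of $\partial U(\gamma_q)$, the set $\partial U(\gamma_q)$ still encloses $q$, so $q$ lies in a bounded face of $S_0$, and the cell of the vertical decomposition of $S_0$ containing $q$ has its upper side on the first edge of $S_0$ met by the upward ray from $q$, which is at height at most $h$ because $\partial U(\gamma_q)\subseteq S_0$. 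Second, for every component $\gamma\neq\gamma_q$, every trapezoid of $\mathcal{T}_\gamma$ stabbed by $q$ has its upper side at height at least $h$; this is the crux of the proof.

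To prove the second estimate, fix $\gamma\neq\gamma_q$ admitting a trapezoid stabbed by $q$; then $q\in U(\gamma)$, so $q$ lies in a bounded face $B$ of $\Pi_\gamma$, which is simply connected (because $\Pi_\gamma$ is connected) with boundary a single closed walk $Z\subseteq\gamma$. As $\gamma$ is connected and disjoint from $\gamma_q$, all of $\gamma$, and in particular $Z$, lies in one face $\Phi$ of $\Pi_{\gamma_q}$. If $\Phi$ were bounded it would be simply connected, so the region $B$ enclosed by $Z$ would be contained in $\Phi$; but $F\subseteq B$ gives $\overline F\subseteq\overline B=B\cup Z$, hence $\partial_o F\subseteq B\cup Z$, and since $\partial_o F\subseteq\gamma_q$ is disjoint from $Z\subseteq\gamma$ this forces $\partial_o F\subseteq B\subseteq\Phi$, contradicting that a face of $\Pi_{\gamma_q}$ contains no edge of $\gamma_q$. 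Therefore $\Phi$ is the unbounded face of $\Pi_{\gamma_q}$, so $\gamma$ is disjoint from $U(\gamma_q)$; the upward ray from $q$ then stays in $U(\gamma_q)$, and hence misses $\gamma$, until height $h$, so the first edge of $\gamma$ above $q$ lies at height at least $h$, and a fortiori so does the upper side of any trapezoid of $\mathcal{T}_\gamma$ stabbed by $q$. Combining the two estimates, the lowest trapezoid of $\mathcal{T}$ stabbed by $q$ has its upper side at height at most $h$, so it comes from $\gamma_q$ and therefore belongs to an edge of $\gamma_q$; when this minimum height equals $h$ exactly, the point of $\partial U(\gamma_q)$ realizing it lies on an edge of $\gamma_q$, and unless that point is a vertex of $\mm$ no edge of another component passes through it, so the conclusion still holds (the degenerate vertex case being dispatched by a standard tie-break).

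I expect the second estimate to be the only genuinely delicate point: it says precisely that no component other than $\gamma_q$ enclosing $q$ can contribute a trapezoid strictly below $\partial U(\gamma_q)$, and it hinges on $\Pi_{\gamma_q}$ being connected, through the simple-connectivity of its bounded faces; minor care is also needed with non-simple boundary walks and with vertical rays through vertices. The convention of merging maximal collinear chains within a subdivision of $\mathcal{F}_\gamma$ into a single edge is harmless here, since the ``belongs to'' relation then still designates an edge of the component.
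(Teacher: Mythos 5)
Your proof is correct, and the substance is the same as the paper's: both arguments hinge on (i) the representative trapezoid for $\gamma_q$ produced from the subdivision $S_0 \in \mathcal{F}_{\gamma_q}$ containing all of $\partial U(\gamma_q)$, and (ii) the topological fact that a component $\gamma\ne\gamma_q$ whose bounded face (a simply connected region, because $\Pi_\gamma$ is connected) contains $q$ would force the outer boundary $\partial_o F\subseteq\gamma_q$ inside a single face of $\Pi_{\gamma_q}$, which is impossible. You phrase the second step as a dichotomy on the face $\Phi$ of $\Pi_{\gamma_q}$ containing $\gamma$, concluding $\gamma\cap U(\gamma_q)=\emptyset$ and then arguing on the first exit height $h$ of the vertical ray; the paper runs the argument by contradiction directly on the upper sides of two trapezoids. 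These are the same argument packaged differently. A small remark: since you establish $\gamma\cap U(\gamma_q)=\emptyset$ and the ray at heights $[0,h]$ stays within $U(\gamma_q)$, your second estimate is in fact \emph{strict}, so the final tie-breaking discussion at height exactly $h$ is unnecessary (though harmless).
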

\begin{proof}
	We first claim that a trapezoid $\Box_q$ in $\mathcal{T}$ 
	belonging to an edge in $\gamma_q$ contains $q$ if $F_q$ is a bounded face,
	 where $\gamma_q$ is the 
	connected component of $\lmm$ containing the outer boundary
	of the face $F_q$ containing $q$.
	By definition, there is a connected subdivision $\Pi'$ in $\mathcal{F}_{\gamma_q}$ containing 
	all edges of the boundary of $U(\gamma_q)$.
	Thus $q$ is contained in a bounded face of $\Pi'$ in its closure.
	Since the cells of the vertical decomposition of $\Pi'$ are contained
	in $\mathcal{T}$, one of them contains $q$.
	
	Then we claim that any trapezoid $\Box$ containing $q$ and belonging
	to an edge on $\lmm \setminus \gamma_q$ 
	has
	the upper side lying above the upper side of $\Box_q$ (i.e., 
	the vertical upward ray from $q$ intersects the upper side of $\Box_q$ before
	intersecting the upper side of $\Box$.) This claim implies the lemma in the case that $F_q$ is bounded.
	Assume to the contrary that the upper side of $\Box$ lies below the upper side of $\Box_q$. 
	Let $\gamma$ be the connected component of $\lmm$ containing the edge to which $\Box$ belongs. 
	Since $\Box$ contains $q$, the subdivision induced by $\gamma$ has
	a bounded face containing $q$ in its closure.
	This means that the outer boundary of $F_q$ is contained in 
	the closed region bounded by the outer boundary of this bounded face.
	Notice that, $\gamma$ and $\gamma_q$ are disjoint since they are maximal connected components of $\lmm$. Moreover, $\gamma$ is contained
	in the interior of $U(\gamma_q)$, which contradicts
	that $\gamma_q$ contains the outer boundary of $F_q$.
	
	Now consider the case that $F_q$ is the unique unbounded face of $\mm$. 
	For any connected subdivision induced by edges of $\mm$, the unique unbounded face contains $q$. Therefore, no trapezoid of $\mathcal{T}$
	contains $q$. This proves the lemma. 
\end{proof}

\begin{figure}
	\begin{center}
		\includegraphics[width=0.8\textwidth]{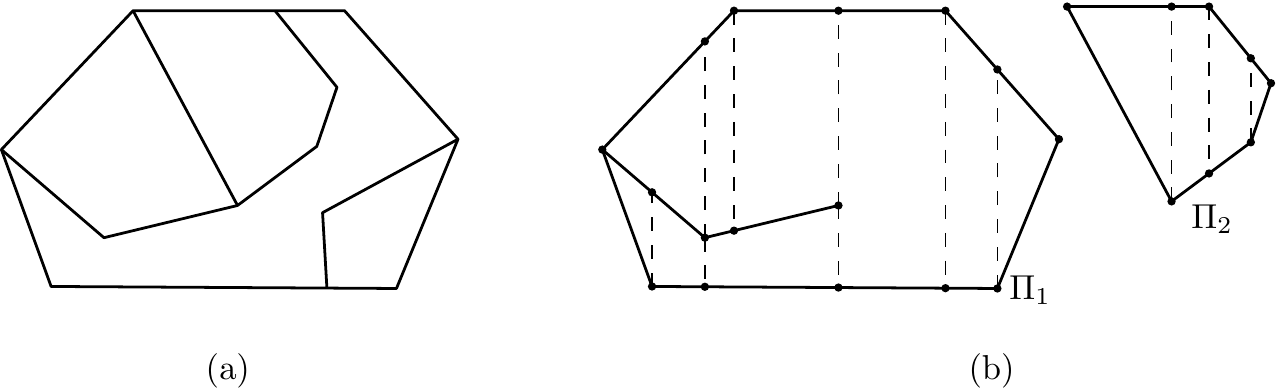}
		\caption{\small (a) A connected component $\gamma$. (b) A set of two subdivisions covering $\gamma$. The set $\mathcal{T}_\gamma$ consists of  
			the trapezoids
			in the vertical decompositions of $\Pi_1$ and $\Pi_2$.
			\label{fig:vertical-decompose-finer}}
	\end{center}
\end{figure}

The following lemma shows that the size of $\findcc$  is $O(S(n))$,
where $S(n)$ is the size of a stabbing-lowest data structure.

\begin{lemma}
	The size of $\mathcal{T}$ is $O(n)$, where $n$ is the complexity of the current subdivision.\label{lem:size-D1}
\end{lemma}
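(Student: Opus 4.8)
The plan is to bound the total number of trapezoids by bounding the total complexity of the subdivisions whose vertical decompositions produce $\mathcal{T}$, since the vertical decomposition of a connected planar subdivision with $m$ edges has $O(m)$ cells (as recalled in the Preliminaries, because only bounded faces are decomposed). Concretely, $|\mathcal{T}| = \sum_\gamma |\mathcal{T}_\gamma|$ and $|\mathcal{T}_\gamma| = \sum_{\Pi' \in \mathcal{F}_\gamma} O(|\Pi'|)$, so it suffices to show $\sum_\gamma \sum_{\Pi' \in \mathcal{F}_\gamma} |\Pi'| = O(n)$. Here I would invoke the defining property of a covering family $\mathcal{F}_\gamma$: each edge of $\gamma$ lies in at most two of the subdivisions in $\mathcal{F}_\gamma$. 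Hence $\sum_{\Pi' \in \mathcal{F}_\gamma} |\Pi'| \le 2|\gamma|$, where $|\gamma|$ denotes the number of edges of $\mathcal{m}$ in the component $\gamma$. Summing over all connected components, which partition the edge set of $\mathcal{m}$, gives $\sum_\gamma 2|\gamma| = 2n$, and therefore $|\mathcal{T}| = O(n)$.

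The one subtlety I would be careful about is the clause in the construction that says: if a union of several edges (with their endpoints) in some subdivision of $\mathcal{F}_\gamma$ forms a single line segment, those edges are treated as one edge. This can only decrease the complexity $|\Pi'|$ of each subdivision relative to the raw edge count, so it does not hurt the bound; I would simply note that this merging step only makes $|\Pi'|$ smaller, so the inequality $\sum_{\Pi'\in\mathcal{F}_\gamma}|\Pi'| \le 2|\gamma|$ still holds. A second point worth a sentence: the constant hidden in the $O(|\Pi'|)$ bound on the number of vertical-decomposition cells is absolute (it does not depend on $\gamma$ or on which subdivision in the family we look at), so pulling it outside the double sum is legitimate and yields a single global constant.

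The argument is essentially a counting argument with no real obstacle; the only place where one must be slightly attentive is making sure that the "at most two subdivisions per edge" property is used at the level of $\gamma$'s own edges and that summing over components does not double-count, which is immediate because distinct maximal connected components of $\lmm$ are edge-disjoint. I do not expect to need anything beyond the definition of "covers" and the linear-size bound on vertical decompositions already recalled in Section~\ref{sec:preliminary}.
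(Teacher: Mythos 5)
Your proposal is correct and follows essentially the same counting argument as the paper: bound the total complexity of the subdivisions in $\bigcup_\gamma \mathcal{F}_\gamma$ by $O(n)$ using the ``each edge in at most two subdivisions'' property of a covering family, then apply the linear bound on the number of cells in a vertical decomposition. The two extra remarks you add (collinear-edge merging only shrinks $|\Pi'|$; the hidden constant is absolute) are sound and harmless refinements that the paper leaves implicit.
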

\begin{proof}
	We first claim that the total complexity of the
	subdivisions of $\mathcal{F}_\gamma$ for every connected component
	$\gamma$ of $\lmm$ is $O(n)$. This is simply because
	for each connected component $\gamma$ of $\lmm$, each edge of $\gamma$
	is contained in at most two subdivisions of $\mathcal{F}_\gamma$ by definition.
	Recall that $\mathcal{T}$ is the cells of the vertical decomposition
	of the subdivisions of $\mathcal{F}_\gamma$. 
	The vertical decomposition of a planar subdivision has $O(N)$ cells,
	where $N$ is the complexity of the planar subdivision.
	Therefore, the size of $\mathcal{T}$ is $O(n)$.
\end{proof}

\begin{lemma}
	Given the data structure $\findcc$ of size $O(n)$,
	we can find the connected component of $\lmm$ containing
	the outer boundary of the face of $\mm$ containing a query point 
	in $O(\log^2 n)$ time.
\end{lemma}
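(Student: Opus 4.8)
The plan is to treat this lemma as a thin wrapper around Lemma~\ref{lem:subproblem} together with the stabbing-lowest data structure of Section~\ref{sec:stabbing}. Concretely, $\findcc$ consists of (i) the trapezoid set $\mathcal{T}$ defined above, of size $O(n)$ by Lemma~\ref{lem:size-D1}, where each trapezoid is equipped at construction time with a pointer to the edge of $\lmm$ containing its upper side (the edge it \emph{belongs} to), and (ii) the stabbing-lowest data structure built on $\mathcal{T}$, whose query time is $O(\log^2 n)$ by Section~\ref{sec:stabbing}.

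First I would describe the query algorithm. Given a query point $q$, query the stabbing-lowest structure on $\mathcal{T}$ to obtain the lowest trapezoid $\Box_q$ stabbed by $q$. If no trapezoid of $\mathcal{T}$ contains $q$, then by Lemma~\ref{lem:subproblem} the face of $\mm$ containing $q$ is the unbounded face, and we report this. Otherwise, follow the pointer stored with $\Box_q$ to the edge $e$ to which $\Box_q$ belongs, and then use $\cc$ to find the connected component of $\lmm$ containing $e$; we return this component.

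For correctness, I would invoke Lemma~\ref{lem:subproblem} directly: the edge to which the lowest trapezoid stabbed by $q$ belongs lies in the connected component $\gamma_q$ of $\lmm$ containing the outer boundary of the face of $\mm$ containing $q$, so the component reported is exactly $\gamma_q$. The only point needing a word of care is that, because of the technical convention of merging collinear edges when forming the subdivisions in $\mathcal{F}_\gamma$, the ``edge'' $e$ associated with $\Box_q$ may be a union of edges of $\mm$; but any genuine edge of $\mm$ contained in it lies in the same connected component of $\lmm$, so mapping back to such an edge and querying $\cc$ still returns $\gamma_q$. For the running time, the stabbing-lowest query costs $O(\log^2 n)$, following the pointer to $e$ costs $O(1)$, and the $\cc$ lookup costs $O(\alpha(n))$, for a total of $O(\log^2 n)$.

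I do not expect a genuine obstacle here: the substantive work is already carried by Lemma~\ref{lem:subproblem} (proved) and by the performance guarantees of the stabbing-lowest structure (established in Section~\ref{sec:stabbing}). The only thing the proof must pin down is the bookkeeping — that every trapezoid of $\mathcal{T}$ carries, from the moment it is created from a vertical decomposition of some subdivision in $\mathcal{F}_\gamma$, the pointer to the edge of $\lmm$ on whose supporting line its upper side lies, and that this edge (or a sub-edge of it) is an element stored by $\cc$; both are immediate from how $\mathcal{T}$ and $\cc$ are built.
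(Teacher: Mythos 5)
Your proposal is correct and matches the paper's intended (and unwritten) proof: the paper states this lemma without a separate argument, since it follows directly from Lemma~\ref{lem:subproblem}, the $O(n)$ bound of Lemma~\ref{lem:size-D1}, and the $O(\log^2 n)$ query time of the stabbing-lowest structure from Section~\ref{sec:stabbing}. Your handling of the collinear-edge convention and the $O(\alpha(n))$ lookup in $\cc$ is exactly the bookkeeping the paper leaves implicit.
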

\subsubsection{Update Algorithm}
We maintain a stabbing-lowest data structure on $\mathcal{T}$.
As we did in the previous section, let $\mathcal{T}_\gamma$ be the 
set of the cells of $\mathcal{T}$ which belongs to an edge of $\gamma$
for each connected component $\gamma$ of $\mm$. 
Let $\mathcal{F}_\gamma$ be a set of the connected subdivisions 
such that $\mathcal{T}_\gamma$ consists of the cells of the 
vertical decomposition of the subdivisions of the set. 
Once $\mathcal{F}_\gamma$ covers 
$\mm_\gamma$ for every connected component of $\gamma$
of $\mm$, the query algorithm takes $O(\log^2 n)$ time. 
In this section, we show how to update $\mathcal{T}$ so that $\mathcal{F}_\gamma$ covers
$\mm_\gamma$ for every connected component. 
But we do not maintain
the sets $\mathcal{F}_\gamma$ and $\mathcal{T}_\gamma$ 
for a connected component $\gamma$ of $\lmm$.
We use them only for description purpose.

For the insertion of a vertex $v$, we do not need to do anything 
for $\findcc$. To see this,
observe that $\lmm$ remains the same after the insertion of $v$. The insertion of $v$
splits one edge, say $e$, into two edges, say $e_1$ and $e_2$. 
For the connected component $\gamma$ of $\lmm$ containing $e$, we have a set $\mathcal{F}_\gamma$ of subdivisions covering $\gamma$. The edge $e$ appears at most two
sets in $\mathcal{F}_\gamma$ by the definition.
Imagine that we replace $e$ into $e_1$ and $e_2$ for such sets. Recall that we consider the edges
on a line segment as one edge in the construction of the vertical decomposition.
Thus, the vertical decomposition of the subdivision induced by such a set remains the same.
Therefore, $\findcc$ also remains the same.

We now process the insertion of an edge $e$ by inserting a number of trapezoids to $\mathcal{T}$.
Here, we use $\Pi$ to denote the subdivision of complexity $n$ \emph{before} $e$ is inserted.
There are four cases: $e$ is not incident to $\lmm$, 
only one endpoint of $e$
is contained in $\lmm$, the endpoints of $e$ are contained in
distinct connected components of $\lmm$,
and the endpoints of $e$ are contained in the same connected component of $\lmm$. We can check if $e$ belongs to each case in $O(\log n)$ time
using the data structures described at the beginning of Section~\ref{sec:main}.
For the first three cases, we do not need to update $\mathcal{T}$.
In the first case, 
a new connected component, which consists of $e$ only, appears in the current subdivision.
However, the subdivision induced by the new connected component does not have any bounded face.
Therefore, we do not need to update $\mathcal{T}$.
In the second and third cases, no new face appears in the current subdivision.
We are required to update $\mathcal{T}$ only if 
$e$ makes a new face in the current subdivision. In other words,
the conditions on the definition of $\mathcal{F}_\gamma$ 
are not violated in these cases. (We will see this in more detail in Lemma~\ref{lem:correctness}.) 
Thus we do not need to update $\mathcal{T}$.

Now consider the remaining case: the endpoints of $e$ are contained in 
the same connected component, say $\gamma$, of $\lmm$. 
 Recall that $U(\gamma)$ is closed.
If $e$ is contained in the interior of $U(\gamma)$, we do nothing since $\mathcal{F}_\gamma$ covers $\gamma \cup e$.
Note that $e$ is contained in the interior of $U(\gamma)$ if and only if
 $e$ is contained in the interior of $U(\gamma\cup e)$.
 We can check in constant time if it is the case by Lemma~\ref{lem:in-union}. 
If $e$ is not contained in the interior of $U(\gamma)$, 
we trace the edges of the new face in time linear in the complexity of the face using the data structures presented at the beginning of Section~\ref{sec:main}. Then we compute the vertical decomposition of the face in the same time~\cite{triangulation}, and insert them to $\mathcal{T}$.
This takes time linear in the number of the new trapezoids inserted to $\mathcal{T}$, which is $O(n)$ in total over all updates by Lemma~\ref{lem:size-D1} and the fact that no trapezoid is removed from $\mathcal{T}$. 
As new trapezoids are inserted to $\mathcal{T}$, we update the 
stabbing-lowest data structure on $\mathcal{T}$. 

\begin{lemma}\label{lem:in-union}
	We can maintain a data structure of size $O(n)$ on $\mm$ supporting 
	$O(\log n)$ insertion time so that
	given an edge $e$ of $\mm$, we can check if it is contained in the interior of $U(\gamma)$
	in constant time, where $\gamma$ is the connected component of $\lmm$ containing $e$.
\end{lemma}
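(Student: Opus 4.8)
The plan is to reduce the geometric condition to a combinatorial one inside $\Pi_{\gamma_e}$ and to exploit that the subdivision is incremental. For a point $p$ in the relative interior of an edge $e$ of $\gamma$, a small enough disk centered at $p$ is cut by $e$ into two half-disks, each contained in one of the two faces of $\Pi_\gamma$ incident to $e$. Hence $e$ lies in the interior of $U(\gamma)$ if and only if both faces of $\Pi_\gamma$ incident to $e$ are bounded, equivalently, if and only if $e$ is not incident to the unbounded face $f_\infty$ of $\Pi_\gamma$ (a bridge with $f_\infty$ on both of its sides still counting as incident to $f_\infty$). So it suffices to store, for every edge $e$, a single bit recording whether $e$ is currently incident to the unbounded face of $\Pi_{\gamma_e}$, and to answer a query in $O(1)$ time by reading this bit.

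The second ingredient is monotonicity. Since updates are insertions only, a bounded face of $\Pi_\gamma$ is only ever subdivided, never merged with $f_\infty$, and every sub-face of a bounded face is bounded; hence each bit flips at most once over the whole update sequence, so the total number of flips is $O(n)$. To locate and perform the flips I would additionally maintain, for each connected component $\gamma$, the boundary walk of $f_\infty(\Pi_\gamma)$ — a single closed walk because $\Pi_\gamma$ is connected — as a cyclic list in which each edge and each vertex points to its (at most two) occurrences. At any moment each edge occurs on at most one such walk and at most twice, so the total size of all these walks, hence of the whole data structure, is $O(n)$.

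It remains to describe an insertion, following the case split already used for $\findcc$ (each case identified in $O(\log n)$ time using $\incident$, $\cc$ and $\outerbd$). If $e$ is non-incident to $\lmm$, or is a pendant edge, or joins two components lying in each other's unbounded faces, then no bounded region is enclosed and no existing bit changes; we only splice $e$, possibly twice, into the relevant walk. If $e$ joins the two endpoints of a single component $\gamma$ with $e\notin\mathrm{int}(U(\gamma))$ — checked in $O(1)$ via the bits — then $e$ separates a new bounded face $f_{\mathrm{new}}$ from $f_\infty$; since $f_{\mathrm{new}}$ and $f_\infty$ are adjacent only across $e$, every edge other than $e$ on $\partial f_{\mathrm{new}}$ now has only bounded faces around it, so we walk $\partial f_{\mathrm{new}}$ (this walk is already traversed when updating $\findcc$), flip each of these bits, and replace the corresponding arc of the $f_\infty$-walk by $e$. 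If $e$ joins two distinct components and one of them, say $\gamma'$, becomes enclosed in a bounded face of the other, then precisely the edges on $\partial f_\infty(\Pi_{\gamma'})$ become interior to $U$, so we traverse that walk, flip those bits, and discard it.

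The main obstacle is this last case: detecting, without a point-location query, which of the two components (if either) is enclosed in a bounded face of the other. I would determine this by using the maintained $f_\infty$-walks to identify, at each endpoint of $e$, the wedge that $e$ enters and whether that wedge bounds the unbounded face of its component, and then argue that the set of edges whose status changes is exactly the boundary walk of the unbounded face of the enclosed component — so this extra work is charged against the $O(n)$ bound on the number of flips. With this in hand, each insertion costs $O(\log n)$ plus $O(1)$ per flipped bit, which is $O(\log n)$ amortized, the additional traversals coincide with those already performed for $\findcc$, and the space is $O(n)$.
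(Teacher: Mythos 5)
Your design is a legitimate alternative to the paper's: you keep one bit per edge (``is $e$ incident to $f_\infty(\Pi_{\gamma_e})$?'') together with the explicit $f_\infty$-walk of each component, where the paper instead keeps a three-state flag (\emph{null}/\emph{true}/\emph{false}) and, for \emph{true} edges, an orientation recording which side is $U(\gamma)$. Both encode the same local information, and the crucial observation that the total number of state changes is $O(n)$ because $U$ only grows under insertions is the same in both. So the skeleton is fine.

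However, the update rule you state for the single-component case is wrong. You write that when $e$ closes a new bounded face $f_{\mathrm{new}}$, ``$f_{\mathrm{new}}$ and $f_\infty$ are adjacent only across $e$, every edge other than $e$ on $\partial f_{\mathrm{new}}$ now has only bounded faces around it,'' and you flip all of their bits. This is false in general. Take $\gamma$ to be the path $a\text{--}c\text{--}b$ and $e=ab$: after insertion, $f_{\mathrm{new}}$ (the triangle interior) and $f_\infty$ are adjacent across \emph{all three} edges, and $ac,cb$ remain on $\partial f_\infty$; flipping their bits would immediately make the data structure answer queries incorrectly, and since bits never flip back under your scheme the error persists. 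The correct rule is to flip only those edges whose $f_\infty$-walk occurrences \emph{all} disappear when that walk is split at the two endpoints of $e$ (an edge previously traversed twice by the $f_\infty$-walk keeps one occurrence and must not flip). You have the walk, so this is recoverable, but you did not state it; and once corrected, your amortization must be charged against removed walk occurrences (still $O(n)$ in total), not against flips --- lengthening the path above shows a single update can traverse $\Theta(n)$ edges while flipping zero bits. Finally, you yourself flag the two-component case as unresolved; the paper resolves the analogous difficulty by storing an orientation on \emph{true}-flagged edges so that the wedge at each endpoint of $e$ can be classified in $O(1)$. Your $f_\infty$-walk contains the equivalent information, but extracting ``is this wedge at $v$ part of $f_\infty$'' in $O(1)$ requires storing, per walk occurrence, which wedge it exits/enters --- a detail one would have to spell out for the claimed bounds to hold.
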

\begin{proof}
	We simply maintain a \emph{flag} for each edge of $\mm$ which has one of the three states: \emph{true}, \emph{false}, and \emph{null}.
	The flag of an edge $e$ is set to \emph{true} if and only if 
	$e$ lies on the boundary of $U(\gamma)$, where $\gamma$ is the connected component
	of $\lmm$ containing the edge. If the flag of an edge of $\mm$ is true,
	we give a direction to the edge so that each connected component of 
	the boundary of $U(\gamma)$ can be traversed in counterclockwise order around $U(\gamma)$.  
	The flag is set to \emph{null} if and only if it does not contained in  $U(\gamma)$.
	The flag is set to \emph{false} if and only if it is contained in the interior of $U(\gamma)$.
	Using the flag, we can check if an edge $e$ is contained in the interior of $U(\gamma)$ in constant time.
	
	We show that the flags can be maintained in $O(n\log n)$ total time
	in the course of the insertions of $n$ edges and vertices.
	A new vertex $v$ lying on an edge $e$ splits $e$ into two subedges.
	We find such two subedges in $O(\log n)$ time, and then we let each of them have the same flag as $e$ in constant time. We are done.

	Now consider the insertion of an edge $e$. 	
	We check if it is incident to $\lmm$ in $O(\log n)$ time. If not, we set the flag to \emph{null} since $U(e)$ is empty. 
	In the case that it is incident to $\lmm$, we find the edges incident to each endpoint $v$ of $e$ 
	that come before and after $e$ around $v$. 
	Using their flags and their directions, we can check in constant time
	if $e$ lies on the boundary of $U(\gamma)$. We set 
	the flag of $e$ accordingly. If the flag of $e$ is \emph{true}, 
	some edges of $\gamma$ 
	are required to 
	change their flags.  
	Such edges are the outer boundary edges of the new face made by the insertion of $e$. Moreover, if such an edge had a flag of \emph{null} (or \emph{true}), its flag
	are required to change to \emph{true} or \emph{false} (or \emph{false}).
	We trace the outer boundary of the new face from $e$ in time linear in the size of the outer boundary using the data structures presented at the  beginning of Section~\ref{sec:main}, and set the flag of each such edge to \emph{true} or \emph{false} accordingly.
	
	The total time for edge insertions
	is linear in the number of the total change on the flags due to the edge insertions and the time for checking if each edge is incident to $\lmm$.
	Since $\Pi$ is incremental, the flag value of an edge 
	turns to the true or false value only. Also, the false value does not turn
	to some other values. Therefore, the amount of the total change on the flags is $O(n)$, and the total update time is $O(n\log n)$.
\end{proof}
	
	For the correctness of the update algorithm, we have the following lemma.
\begin{lemma}
	For each connected component $\gamma$ of $\lmm$, there is a set $\mathcal{F}_\gamma$
	of connected subdivisions covering $\gamma$ such that 
	$\mathcal{T}_\gamma$ consists of 
	the cells of the vertical decompositions of the subdivisions of $\mathcal{F}_\gamma$ at any moment. \label{lem:correctness}
\end{lemma}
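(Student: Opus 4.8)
The plan is to prove the lemma by induction on the sequence of updates. The statement itself is (almost) the invariant to maintain, but to make the induction close I would strengthen it: alongside the family $\mathcal{F}_\gamma$ of connected subdivisions covering $\gamma$ with $\mathcal{T}_\gamma$ equal to the cells of its vertical decompositions, I would record which subdivision of $\mathcal{F}_\gamma$ is \emph{responsible} for each of the two local sides of each edge of $\gamma$ --- a subdivision being responsible for a side only if a bounded face of $\Pi_\gamma$ lies locally on that side --- and which subdivision $\Pi_{\mathrm m}\in\mathcal{F}_\gamma$ is the one containing $\partial U(\gamma)$, with $\Pi_{\mathrm m}$ responsible for the inner side of every edge of $\partial U(\gamma)$ (so that, in particular, $U(\Pi_{\mathrm m})=U(\gamma)$). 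Then ``at most two subdivisions per edge'' follows because an edge has only two sides. The base case $n=0$ is vacuous since $\mathcal{T}=\emptyset$.

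For the inductive step I would follow the case split of the update algorithm and, in each case, produce the new family so that it matches the change the algorithm actually makes to $\mathcal{T}$. A vertex insertion changes neither $\lmm$, any $U(\gamma)$, nor $\mathcal{T}$; replacing the split edge by its two collinear pieces in whatever subdivisions contained it alters no vertical decomposition (collinear pieces count as one edge), so the strengthened invariant persists verbatim. An edge not incident to $\lmm$ creates a singleton component with no bounded face, so $\mathcal{F}_{\{e\}}=\{\Pi_{\{e\}}\}$ matches its empty contribution to $\mathcal{T}$. If $e$ has exactly one endpoint on $\lmm$, or its two endpoints in distinct components, then no new bounded face appears (in the latter case $e$ is a bridge), $\mathcal{T}$ is untouched, and I would take the new family to be (the union of) the old one(s) with $e$ given no responsibility. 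The only point needing care is keeping a single subdivision responsible for all of $\partial U$ of the merged component when the two old $U$'s are disjoint: then the bridge $e$ runs between $\partial U(\gamma_1)$ and $\partial U(\gamma_2)$, so I can merge the two masters into one connected subdivision by adjoining a sub-path of $\gamma$ through $e$ that stays in the unbounded faces of both; since a bridge encloses nothing, this merged subdivision has the same vertical decomposition as the two masters separately. Finally, if $e$ lies in the interior of $U(\gamma)$ the algorithm does nothing, $\partial U(\gamma)$ and $\mathcal{T}$ are unchanged as edge sets, and the old family still works with $e$ given no responsibility.

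The one case in which $\mathcal{T}$ genuinely changes is $e$ having both endpoints on a component $\gamma$ but lying outside the interior of $U(\gamma)$. Then $e$ splits the unbounded face of $\Pi_\gamma$ into the unbounded face and a single new bounded face $F$; because the unbounded face of a connected subdivision has connected boundary, so does $F$, hence the subdivision $\Pi_F$ induced by $\partial F$ is connected, and the algorithm inserts precisely the cells of its vertical decomposition into $\mathcal{T}$. I would replace $\Pi_{\mathrm m}$ by $\Pi^\ast:=\Pi_{\mathrm m}\cup\Pi_F$. Since $U(\Pi_{\mathrm m})=U(\gamma)$ and the interior of $F$ is disjoint from $U(\gamma)$, the edges of $\partial F$ not already in $\Pi_{\mathrm m}$ are glued inside the unbounded face of $\Pi_{\mathrm m}$, so $\Pi^\ast$ has exactly one bounded face more than $\Pi_{\mathrm m}$ --- namely $F$ --- and no old edge in the interior of $F$; hence the vertical-decomposition cells of $\Pi^\ast$ are those of $\Pi_{\mathrm m}$ together with those of $F$, exactly the update. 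Each edge of $\partial F$ had the old unbounded face on its $F$-side, hence that side carried no responsibility; I make $\Pi^\ast$ responsible for the $F$-side of each such edge, keep all prior responsibilities, and note $\partial U(\gamma\cup e)\subseteq\partial U(\gamma)\cup\partial F\subseteq\Pi^\ast$, so the strengthened invariant is restored.

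I expect this last case to be the main obstacle: the geometric claims that $\partial F$ is connected and that forming $\Pi^\ast$ introduces only $F$ and no old edge inside $F$ (so the vertical decompositions really compose), and especially the bookkeeping that guarantees no edge ends up in three subdivisions of $\mathcal{F}_\gamma$. The latter rests on the invariant that a subdivision is made responsible for a side of an edge only when a bounded face sits there: an edge that newly joins $\partial F$ was, until that moment, incident to the unbounded face on the relevant side and so had an unused slot. This is exactly the two-sided charging that bounds the work of $\outerbd$, and I would import that argument.
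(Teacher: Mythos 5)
Your proof follows essentially the same route as the paper's: induction over the update sequence, the same case split on how $e$ attaches to $\lmm$, and the same transformation of $\mathcal{F}_\gamma$ in each case (union the families when no new bounded face appears, glue the two ``master'' subdivisions across $e$ when two disjoint regions merge, and append $\partial F$ to the master when $e$ closes a new bounded face outside $\mathrm{int}\,U(\gamma)$). The one genuine addition is your strengthened ``side-responsibility'' invariant, which makes the at-most-two-subdivisions bound of Condition~(1) airtight where the paper is somewhat terse, and your explicit sub-path argument for why the merged master stays connected, a point the paper glosses over with ``Notice that it is connected.''
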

\begin{proof}
	Suppose that the lemma holds for every connected component of $\lmm$ 
	before $e$ is inserted, and then we are to show that
	the lemma holds after $e$ is inserted.	
	More specifically, we are to prove the following claim: 
	there is a set $\mathcal{F}_\gamma$ of connected subdivisions induced by
	edges of $\gamma$ such that
	(1) each edge of $\gamma$ is contained
	in at most two subdivisions, (2) one of the subdivisions contains 
	all edges of the boundary of $U(\gamma)$, and
	(3) $\mathcal{T}_\gamma$ consists of 
	the cells of the vertical decompositions
	of the subdivisions of $\mathcal{F}_\gamma$.
	For a connected component of $\gamma$ not incident to $e$, 
	we do not insert any trapezoid to $\mathcal{T}_\gamma$, 
	and $\gamma$ is still a maximal connected component of $\lmm$ after
	$e$ is inserted.
	Thus the claim still holds for such a connected component.
	In the following, we prove the claim for connected components incident to $e$.
	
	Consider the case that only one endpoint of $e$
	is contained in $\lmm$. Let $\gamma$ be 
	the connected component containing
	an endpoint of $e$. 
	In this case, no new face appears, and we do not insert any trapezoid to $\mathcal{T}$.
	We show that the claim still holds for the new connected component $\bar{\gamma}=\gamma\cup e$. By the assumption,
	$\mathcal{T}$ contains the cells of the vertical decompositions of the
	subdivisions in a set $\mathcal{F}_\gamma$ covering $\gamma$.  
	We just set  $\mathcal{F}_{\bar{\gamma}}$ to $\mathcal{F}_{{\gamma}}$.
	We show that $\mathcal{F}_{\bar{\gamma}}$ satisfies Conditions~(1--3).
	Since $\mathcal{T}$ remains the same and no new face appears in any of the subdivisions of $\mathcal{F}_{\bar{\gamma}}$,
	Condition~(3) holds immediately. 
	Since $e$ is contained in at most one subdivision of $\mathcal{F}_{\bar{\gamma}}$, Condition~(1) also holds.
	The boundary of $U(\bar{\gamma})$ coincides with the boundary of $U(\gamma)$. Therefore, Condition~(2) holds.
	
	Consider the case that two endpoints $e$ are contained in two distinct connected 
	components of $\lmm$. Let $\gamma$ and $\gamma'$ be such connected components.
	 In this case,
	no new face appears, and we do not insert any trapezoid to $\mathcal{T}$.
	We show that the claim still holds for the new connected component $\bar{\gamma}=\gamma\cup\gamma' \cup e$. By the assumption,
	$\mathcal{T}$ contains the cells of the vertical decompositions of the
	subdivisions in two sets
	 $\mathcal{F}_\gamma$ and $\mathcal{F}_{\gamma'}$ 
	covering $\gamma$ and $\gamma'$, respectively.  
	Since $\gamma$ and $\gamma'$ are different connected components of $\lmm$, there are three cases: $\gamma$ is contained in $U(\gamma')$,
	$\gamma'$ is contained in $U(\gamma)$, or $U(\gamma)$ and $U(\gamma')$ are
	disjoint. For the first case, $U(\bar{\gamma})$ coincides with $U(\gamma')$. 
	For the second case, $U(\bar{\gamma})$ coincides with $U(\gamma)$.
	For the third case, $U(\bar{\gamma})$ coincides
	with $U(\gamma)\cup U(\gamma')$ by definition.
	For the first and second cases, we let
	$\mathcal{F}_{\bar{\gamma}}$ be the union
		of $\mathcal{F}_\gamma$ and $\mathcal{F}_{\gamma'}$.
		Then Conditions (1--3) hold immediately. 
	For the last case, we consider the two subdivisions, say $\Pi_\gamma'$ and $\Pi_{\gamma'}'$, 
	from $\mathcal{F}_\gamma$ and $\mathcal{F}_{\gamma'}$ containing
	the boundary edges of $U(\gamma)$ and $U(\gamma')$, respectively.
	We merge two subdivisions, and insert $e$ to the resulting subdivision. 
	Let $\Pi_{\bar{\gamma}}'$ be 
	the resulting subdivision.
	Notice that it is connected.
	We let 	$\mathcal{F}_{\bar{\gamma}}$ be the union of
	 $\mathcal{F}_\gamma$ and $\mathcal{F}_{\gamma'}$ excluding  $\Pi_\gamma'$ and $\Pi_{\gamma'}'$ and including $\Pi_{\bar{\gamma}}'$.
	We show that $\mathcal{F}_{\bar{\gamma}}$ satisfies Conditions~(1--3).
	Since $\mathcal{T}$ remains the same and 
	the set of the cells of the vertical decompositions of $\Pi_\gamma'$ and $\Pi_{\gamma'}'$ coincides with the set of the vertical decomposition of $\Pi_{\bar{\gamma}}'$, Condition~(3) holds. 
	Since $\gamma$ and $\gamma'$ are different connected components of $\lmm$, Condition~(1) also holds immediately.
	Also, every boundary edge of $U(\bar{\gamma})$ 
	appears on the boundary 
	of exactly one of $U(\gamma)$ and $U(\gamma')$. Therefore, Condition~(2) also holds.
	
	Now consider the case that both endpoints of $e$ are contained in the 
	same connected component of $\lmm$, say $\gamma$.
	In this case, a new face $F$ appears on the subdivision induced by $\bar{\gamma}=\gamma\cup e$. Note that $F$ has no hole since $\bar{\gamma}$ is connected. If $F$ is contained in the interior of $U(\gamma)$, we do nothing
	and set $\mathcal{F}_{\bar{\gamma}}$ to $\mathcal{F}_\gamma$. 
	Conditions (1--3) hold immediately.
	Now assume that $F$ is not contained in the interior of $U(\gamma)$.
	By construction, 
	we insert the cells of the vertical decomposition of $F$ to $\mathcal{T}$. 
	By the assumption, we have $\mathcal{F}_\gamma$ satisfying Conditions~(1--3).
	One of the subdivisions of $\mathcal{F}_\gamma$ contains the edges on the  boundary of $U(\gamma)$. Imagine that we add the edges on the boundary of $F$
	to such a subdivision of $\mathcal{F}_\gamma$.
	This subdivision remains to be connected since
	the boundary of $F$ is incident to the boundary of $U(\gamma)$. 
	Conditions~(1--3) hold because an edge on the boundary of $F$ appearing on $U(\gamma)$
	lies in the closure of $U(\bar{\gamma})$. 
	Therefore, the claim holds for any case after $e$ is inserted.
\end{proof}

Notice that we do not remove any trapezoid from $\mathcal{T}$. 
Let $S(n), Q(n)$ and $U(n)$ be the size, the query time and the update time
of an insertion-only stabbing-lowest data structure for $n$ trapezoids, respectively. In the case of the data structure described in Section~\ref{sec:stabbing}, we have $S(n)=O(n\log n)$, $Q(n)=O(\log^2 n)$
and $U(n)=O(\log n\log\log n)$.
By Lemma~\ref{lem:size-D1},
the total number of trapezoids inserted to $\mathcal{T}$ is $O(n)$.
Then we have the following lemmas.

\begin{lemma}
	The total update time for $n$ insertions of edges and vertices 
	is $O(n\cdot U(n))$ time.
\end{lemma}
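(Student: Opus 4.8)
The plan is to account for the total cost of processing $n$ insertions by splitting it into three parts: detecting which of the four cases applies to each new edge, tracing the boundary of any newly created face, and updating the stabbing-lowest structure as trapezoids are inserted. For the first part, the data structures $\incident$, $\cc$, and $\outerbd$ described at the beginning of Section~\ref{sec:main} let us determine in $O(\log n)$ time which case an insertion falls into, and the vertex-insertion case requires no work on $\findcc$ at all; this contributes $O(n\log n)$ in total. For the second part, whenever a new face $F$ appears and is not contained in the interior of $U(\gamma)$, we trace its boundary in time linear in its complexity and compute the vertical decomposition of $F$ in the same time by~\cite{triangulation}; since each inserted trapezoid is counted a constant number of times here, this is charged against the trapezoid insertions.

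The key step is to bound the total number of trapezoids ever inserted into $\mathcal{T}$. By Lemma~\ref{lem:size-D1}, the set $\mathcal{T}$ has size $O(n)$ at any moment, and by construction no trapezoid is ever removed from $\mathcal{T}$ (this is noted just before the lemma statement). Hence the total number of trapezoid insertions over the whole sequence is $O(n)$. Each such insertion triggers one update of the stabbing-lowest data structure, costing $U(n)$, and contributes $O(1)$ additional bookkeeping (tracing and decomposing, amortized per trapezoid as above, plus the $O(\log n)$ case-check which is dominated since $U(n)=\Omega(\log n)$ for the structure of Section~\ref{sec:stabbing}). Therefore the update work attributable to $\findcc$ is $O(n\log n) + O(n)\cdot U(n) = O(n\cdot U(n))$, using $U(n)=\Omega(\log n)$.

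The main obstacle is making the charging argument airtight: one must check that every component-level operation triggered by an edge insertion — in particular merging two covering subdivisions in the case where the endpoints lie in distinct components, and adding the boundary of $F$ to an existing subdivision of $\mathcal{F}_\gamma$ — either costs nothing for $\mathcal{T}$ (because the vertical decomposition is unchanged, as argued in Lemma~\ref{lem:correctness}) or costs time linear in the number of newly inserted trapezoids. Lemma~\ref{lem:correctness} already establishes that the covering sets $\mathcal{F}_\gamma$ can be maintained consistently and that the only operation generating new trapezoids is the decomposition of a genuinely new face not inside $U(\gamma)$, so the remaining point is purely the accounting: summing $U(n)$ over the $O(n)$ trapezoid insertions and observing that all other per-insertion costs are $O(\log n) = O(U(n))$. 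Putting these together yields the claimed $O(n\cdot U(n))$ total update time.
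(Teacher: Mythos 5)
Your proposal is correct and follows essentially the same reasoning that the paper uses implicitly (the paper states this lemma without an explicit proof, relying on the preceding discussion of the update algorithm, Lemma~\ref{lem:size-D1}, and the remark that no trapezoid is ever removed from $\mathcal{T}$). Your breakdown into case-detection cost, boundary-tracing/decomposition cost charged to the inserted trapezoids, and the $O(n)\cdot U(n)$ term for stabbing-lowest updates — together with the observation that the $O(n\log n)$ overhead is absorbed because $U(n)=\Omega(\log n)$ for the structure of Section~\ref{sec:stabbing} — matches the paper's intended argument.
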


\begin{theorem}
	We can construct a data structure of size $O(S(n))$ so that
	the connected component of $\lmm$ containing the outer boundary
	of the face containing $q$ can be found in $O(Q(n))$ worst case time
	for any point $q$ in the plane,
	where $n$ is the number of edges at the moment.
	Each update takes $O(U(n))$ amortized time.
\end{theorem}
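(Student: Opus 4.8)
The theorem is essentially a packaging statement: it asserts that the $\findcc$ data structure — which we have already built as a stabbing-lowest data structure on the trapezoid set $\mathcal{T}$ — inherits its size, query time, and amortized update time from the corresponding parameters $S(n)$, $Q(n)$, $U(n)$ of the underlying stabbing-lowest structure. So the proof will be an assembly of the lemmas already proved in this subsection, and I would organize it in three pieces (size, query, update), each citing one prior lemma together with the bound $|\mathcal{T}| = O(n)$ from Lemma~\ref{lem:size-D1}.

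\emph{Size.} By Lemma~\ref{lem:size-D1} the set $\mathcal{T}$ has $O(n)$ trapezoids at any moment, and the only data structure $\findcc$ stores is the stabbing-lowest structure on $\mathcal{T}$ (plus, negligibly, the auxiliary structures $\incident$, $\cc$, $\outerbd$, and the flag structure of Lemma~\ref{lem:in-union}, all of which are $O(n)$). Hence the total size is $O(S(n))$, where $S(n)$ is the size of a stabbing-lowest structure on $n$ trapezoids.

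\emph{Query.} Given a query point $q$, we perform one stabbing-lowest query on $\mathcal{T}$ to find the lowest trapezoid stabbed by $q$, which takes $O(Q(n))$ time. By Lemma~\ref{lem:subproblem}, the edge to which this trapezoid belongs lies in $\gamma_q$, the connected component of $\lmm$ containing the outer boundary of the face $F_q$ containing $q$ (and if no trapezoid contains $q$, then $F_q$ is unbounded). Reading off the component of that edge via $\cc$ costs $O(\alpha(n)) = O(Q(n))$ additional time, so the query runs in $O(Q(n))$ worst-case time.

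\emph{Update.} Here I would invoke the amortized analysis already carried out in the Update Algorithm section: by the case analysis there and by Lemma~\ref{lem:correctness}, the only updates that touch $\mathcal{T}$ are edge insertions creating a new face not contained in the interior of $U(\gamma)$, and in that case we insert exactly the cells of the vertical decomposition of the new face — traced and decomposed in time linear in their number. Since no trapezoid is ever removed and, by Lemma~\ref{lem:size-D1}, at most $O(n)$ trapezoids are ever inserted over the whole sequence, the total work for building the trapezoids and feeding them to the stabbing-lowest structure is $O(n) + O(n \cdot U(n)) = O(n\cdot U(n))$; the auxiliary structures contribute $O(n\log n)$, which is subsumed. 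Dividing by the number of updates gives $O(U(n))$ amortized time per update. The one point that needs care — the only genuine obstacle — is verifying that the bound ``$O(n)$ trapezoids inserted in total'' really follows: it requires that we never re-insert the decomposition of a region we have already covered, which is exactly why the check ``$e \subseteq \mathrm{int}\,U(\gamma)$'' (constant time by Lemma~\ref{lem:in-union}) is in place, and why Lemma~\ref{lem:correctness} guarantees that the sets $\mathcal{F}_\gamma$ we maintain implicitly always cover the current components; I would spell this dependency out explicitly before concluding.
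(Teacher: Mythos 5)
Your proposal is correct and takes essentially the same route as the paper: the theorem is stated without an explicit proof there precisely because it is the assembly of Lemmas~\ref{lem:subproblem}, \ref{lem:size-D1}, \ref{lem:in-union}, and \ref{lem:correctness} together with the preceding "total update time is $O(n\cdot U(n))$" lemma, which is exactly what you do. Your closing remark correctly pins down the only subtle dependency (the $O(n)$ total-trapezoid bound hinges on Lemma~\ref{lem:correctness} maintaining the covering invariant and Lemma~\ref{lem:in-union} providing the constant-time guard), so nothing is missing.
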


\subsection{$\locatecc(\gamma)$: Find the Face Containing a Query Point in $\Pi_\gamma$}\label{sec:D2}
For each connected component $\gamma$ of $\lmm$, we maintain a data structure, which is denoted by $\locatecc(\gamma)$, for finding the face of 
$\Pi_\gamma$ containing a query point. Here, we need two update operations for $\locatecc(\cdot)$:
inserting a new edge to $\locatecc(\cdot)$ and merging two data structures $\locatecc(\gamma_1)$ and $\locatecc(\gamma_2)$ for two connected components $\gamma_1$ and $\gamma_2$ of $\lmm$. Notice that we 
do not need to support edge deletion since $\mm$ is incremental.

No known point location data structure supports the merging operation explicitly. 
Instead, one simple way is to make use of the edge insertion operation which is  supported by most of the known data structures for the dynamic point location problem. For example, we can use a point location data structure for incremental subdivisions
given by Arge et al.~\cite{ABG-Improved-2006}. 
Its query time is $O(\log n\log^* n)$ and amortized insertion time
is $O(\log n)$ under the pointer machine model.
For merging two data structures, we simply insert every
edge in the connected component of smaller size to the data structure
for the other connected component.
By using a simple charging argument, we can show that 
the total update (insertion and merging) takes $O(n\log^{2}n)$ time.
The query time is $O(\log n\log^* n)$. 

In this section, we improve the update time 
at the expense of increasing the query time.
Because $\findcc$ requires $O(\log^2 n)$ query time, we are allowed to spend more time on a point location query on a connected component of $\Pi$. 
The data structure proposed in this section supports
$O(\log^2 n)$ query time. The total update time (insertion and merging)
is $O(n\log n\log\log n)$.

\subsubsection{Data Structure and Query Algorithm}
$\locatecc(\gamma)$ allows us to find the face of $\Pi_\gamma$ containing
a query point. Since $\gamma$ is connected and we maintain the outer boundary of each face of $\mm$, 
it suffices to construct a vertical ray shooting structure for the edges of $\gamma$. Recall that the boundary of a face of $\Pi_\gamma$ coincides
with the outer boundary of a face of $\mm$.  
The vertical ray shooting problem is \emph{decomposable} in the sense that
we can answer a query on $\mathcal{S}_1\cup\mathcal{S}_2$ in constant time
once we have the answers to queries on $\mathcal{S}_1$ and $\mathcal{S}_2$ for any two sets $\mathcal{S}_1$ and $\mathcal{S}_2$ of line segments in the plane.
Thus we can use an approach by Bentley and Saxe~\cite{decomposable}.

We decompose the edge set of $\gamma$ into
subsets of distinct sizes
such that each subset consists of exactly $2^i$ edges
for some index $i\leq \lceil \log n\rceil$. Note that there are $O(\log n)$
subsets in the decomposition. We use $\bb(\gamma)$ to denote
the set of such subsets, and $\bb$ to denote the union of $\bb(\gamma)$
for all connected components $\gamma$ of $\lmm$. 
$\locatecc(\gamma)$ 
consists of $O(\log n)$ static vertical ray shooting data structures, one for 
each subset in $\bb(\gamma)$. To answer a query on $\gamma$, we apply a vertical ray shooting query on each subset of $\bb(\gamma)$, and choose the one lying immediately above the query point. This takes $O(Q_s(n)\log n)$ time, where $Q_s(n)$ denotes the query time of the static vertical ray shooting data structure we use.
For a static vertical ray shooting data structure, we present a variant of the (dynamic) vertical ray shooting data structure of 
Arge et al.~\cite{ABG-Improved-2006} because it can be constructed in 
$O(N\log\log n)$ time, where $N$ is the number of the edges in the data structure,
once we maintain an auxiliary data structure, which we call the \emph{backbone} tree.

\paragraph{Backbone tree.} It is a global data structure constructed on 
all edges of $\Pi$ while $\locatecc(\gamma)$ is constructed on the edges
of each connected component $\gamma$ of $\lmm$. The backbone tree allows us 
to construct a static vertical ray shooting data structure $\ds(\beta)$ in 
$O(|\beta|\log\log n)$ time for any subset $\beta$ of $\bb$.

The backbone tree consists of two levels. The base tree is 
an interval tree of fan-out $\log^\epsilon n$ on the edges of $\Pi$
for an arbitrary fixed constant $0<\epsilon<1$. 
The height of the base tree is $O(\log n/\log\log n)$. 
For the definition and notations
for the interval tree, refer to Section~\ref{sec:preliminary}. 
Each node $v$ has three sets of edges of $\Pi$: $\lseg(v)$, $\mseg(v)$,
and $\rseg(v)$. 
For each node $v$, the edges of $\lseg(v)$ (and $\rseg(v)$) have their endpoints on a common
vertical line. Thus they can be sorted in $y$-order.
Recall that we decompose the edge set of each connected component into $O(\log n)$ subsets of distinct sizes, and we denote the set of such subsets
by $\bb$. For each subset $\beta$ of $\bb$, we maintain the sorted list of 
the edges of $\lseg(v)$ (and $\rseg(v)$) contained in $\beta$
 with respect to their $y$-order. 

Also, for each node $v$, the edges of $\mseg(v)$ have their endpoints on
$O(\log^\epsilon n)$ vertical lines. 
This is because we store the part of $e$ excluding the union of $\region{v_1}$ and $\region{v_2}$ to $\mseg(v)$, 
where $v_1$ and $v_2$ are the children of $v$
such that $\region{\cdot}$ contains the endpoints of $e$.  
We construct a segment tree on the edges of $\mseg(v)$ (with respect to the $x$-axis). Note that the segment tree has height of $O(\log\log n)$.
In the segment tree associated with $v$,
each edge of $\mseg(v)$ is stored in $O(\log\log n)$ nodes.
Then for each node $u$ of the segment tree, every edge stored in the node 
crosses the left and right vertical lines on the boundary of 
$\region{u}$, and thus they can be sorted with respect to the $y$-axis.
For each subset $\beta$ of $\bb$, we maintain the sorted list of the edges of $\beta$ stored in $u$ 
with respect to the $y$-axis. Then we have a number of sorted lists, each for 
a subset of $\bb$.

The size of the backbone tree is $O(n\log\log n)$. To see this, observe that each
edge of $\mm$ is stored in at most
three nodes of the base tree, and $O(\log\log n)$ nodes in
secondary trees associated with nodes of the base tree. 

\paragraph{Contracted backbone tree for a subset of $\bb$.}
The backbone tree contains all edges of $\Pi$ and has size of $O(n\log\log n)$.
We are to extract the information of a subset $\beta$ of $\bb$ 
from the backbone tree and construct a tree of size $O(|\beta|\log\log n)$ as follows. Here, we maintain this tree for every subset $\beta$ of $\bb$
as well as the backbone tree. 
Each edge of $\beta$ is stored in at most three nodes in (the base tree of) the backbone tree. 
Let $V$ be the set of the nodes $v$ of 
the backbone tree such that $\lseg(v)$, $\rseg(v)$ or $\mseg(v)$
contains an edge of $\beta$. 
Imagine that we remove a subtree of the base tree if no node of the subtree is in $V$. Also, we imagine that
we contract a node of the base tree if it has only one child. That is,
we remove this node and connect its parent and its child by an edge.
Note that the resulting tree is not necessarily balanced, 
but its height is $O(\log n/\log\log n)$. 
We call a node of the resulting tree which is not in $V$ a \emph{dummy node}.
Note that the number of the dummy nodes is $O(|V|)$. 
For a non-dummy node, we store a pointer pointing to its corresponding node
in the backbone tree.
The resulting tree is the base tree of the contracted backbone tree for $\beta$.

As secondary structures, each node $v$ of the base tree of the backbone tree
has several sorted lists of edges. Among them, we choose the sorted lists 
of edges of $\beta$ only. More specifically, we have the sorted list 
of the edges of $\beta$ in $\lseg(v)$ (in the backbone tree) with respect to their $y$-order.
Similarly, we have the sorted list of the edges of $\beta$ in $\rseg(v)$ 
with respect to their $y$-order.
We choose them and associate them with $v$ in the contracted backbone tree. 
For the edges in $\mseg(v)$, the node $v$ has an associated 
segment tree $T$ of height $O(\log\log n)$.  
Each node of the segment tree has at most one sorted list for edges of $\beta$.
For the contracted backbone tree for $\beta$, we choose the nodes of $T$
which have sorted lists for $\beta$. Let $V_T$ be the set of such nodes.
 Then we remove a subtree of the segment tree 
if no node in the subtree is in $V_T$. But unlike the base tree,
we do not contract a node even though it has only one child.
This makes the merging procedure efficient.
 Then for a node of $V_T$ in the remaining tree, we associate
the sorted list for $\beta$ stored in the node in the backbone tree with the node.
These lists are the secondary and tertiary structures  of the contracted backbone tree for $\beta$.

Now we analyze the space complexity of the contracted backbone tree. 
Each edge of $\beta$ is stored in three nodes in the base tree of the contracted backbone tree, and it is stored in $O(\log\log n)$ nodes of the segment tree associated with a node of the base tree.
Moreover, the size of the base tree is linear in the size of $\beta$,
and the size of the segment tree associated with a node $v$ of the base tree is 
$O(m\log\log n)$, where $m$ is the number of edges of $\mseg(v)$ contained in $\beta$. Here we have an $O(\log\log n)$ factor because we allow a node has only one child in the segment trees unlike the base tree, and because  
the height of a segment tree is $O(\log\log n)$.
Therefore, the total size of the contracted backbone tree for $\beta$
is $O(|\beta|\log \log n)$. 

\begin{lemma}\label{lem:const-contracted-tree}
	Assume that two subsets $\beta_1$ and $\beta_2$ of $\bb$ are merged
	into a subset $\beta$. 
	Given two contracted backbone trees for 
	$\beta_1$ and $\beta_2$, 
	we can update the backbone tree  and construct the contracted
	backbone tree for $\beta$ in $O(|\beta|\log\log n)$ time.
\end{lemma}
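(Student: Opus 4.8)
The plan is to merge the two contracted backbone trees level by level, reusing the structural skeletons rather than rebuilding from scratch, and charging all work to the edges of the smaller set. First I would recall that $\beta = \beta_1 \cup \beta_2$ is still a set of at most $\lceil\log n\rceil$ sizes, and that each edge of $\beta$ occupies at most three nodes in the base tree and $O(\log\log n)$ nodes in the segment trees attached to those base-tree nodes; the desired size bound $O(|\beta|\log\log n)$ was already established above, so only the \emph{time} bound is at issue. Without loss of generality assume $|\beta_2| \le |\beta_1|$; the strategy is to take the contracted backbone tree for $\beta_1$ as the starting point and splice in the $O(|\beta_2|)$ pieces coming from $\beta_2$, spending $O(\log\log n)$ time per inserted piece.

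The key steps, in order, are as follows. (1) \emph{Locating the insertion sites.} Each edge $e \in \beta_2$ is stored in at most three nodes of the \emph{backbone} tree (which already contains all edges of $\Pi$, so no structural change to the backbone tree itself is needed for those three storage slots beyond recording $e$'s membership in the new size class); following the pointer from the corresponding node in the contracted tree for $\beta_2$ to the backbone tree gives us, in $O(1)$ time, the backbone node $v$ with $e \in \lseg(v)\cup\rseg(v)\cup\mseg(v)$. (2) \emph{Patching the base tree.} For each such $v$ that is not already present (as a non-dummy node) in the contracted tree for $\beta_1$, we must add it; since the contracted base tree is just the union of root-to-node paths of height $O(\log n/\log\log n)$, and since we never re-contract, adding the $O(|\beta_2|)$ new nodes and the dummy nodes needed to reconnect them costs $O(|\beta_2|\cdot \log n/\log\log n)$ in the naive accounting — but this is too slow, so the actual argument must instead exploit that the contracted tree for $\beta_1$ already realizes \emph{all} ancestors we could possibly need except $O(|\beta_2|)$ of them, because each ancestor either already carries a $\beta_1$-edge or is a dummy; thus the number of genuinely new nodes is $O(|\beta_2|)$ and they can be threaded in with $O(1)$ pointer surgery each. (3) \emph{Merging the sorted lists.} At each affected base-tree node $v$, the sorted $y$-order lists for $\lseg(v)$ and $\rseg(v)$ restricted to $\beta_1$ and to $\beta_2$ must be merged into one sorted list for $\beta$. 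Here I would \emph{not} do a full linear-time list merge (that would cost $\Theta(|\lseg(v)\cap\beta|)$ and sum to $\Theta(|\beta_1|)$). Instead, observe that the sorted order on $\beta$ at $v$ is already materialized inside the \emph{backbone} tree's own sorted list for the class $\beta$ — or, if the backbone tree stores lists only per original size class, the merge is along $y$-coordinate and the elements of $\beta_2$ can be spliced into the $\beta_1$-list by binary search, $O(\log n)$ per element, which is dominated by... no — so the cleanest route is that the backbone tree is \emph{augmented} to maintain, for the (at most one) size class resulting from the merge, the sorted lists directly, and the contracted tree simply copies pointers into these; copying $O(|\beta_2|)$ pointers costs $O(|\beta_2|)$. (4) \emph{Patching the segment trees.} For each affected $v$, the attached segment tree of height $O(\log\log n)$ must acquire the $O(|\mseg(v)\cap\beta_2|)$ new edges; since we do not contract segment-tree nodes, each new edge touches $O(\log\log n)$ nodes, and at each such node we splice one sorted list pointer, for $O(|\beta_2|\log\log n)$ total over all $v$. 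Finally I would update the backbone tree's bookkeeping so that the new size class is recorded, which is $O(|\beta|\log\log n)$ since that class has $|\beta|$ edges each touching $O(\log\log n)$ secondary nodes.

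Summing, the dominant term is the segment-tree patching and the backbone-tree update, giving $O(|\beta|\log\log n)$ time as claimed. \textbf{The main obstacle} I anticipate is step (3): making the sorted-list merges cost $O(|\beta_2|\log\log n)$ rather than $O(|\beta_1|)$ total. The honest fix is to lean on the backbone tree — which already stores all of $\Pi$ — as the authoritative source of $y$-orderings, so that the contracted tree for $\beta$ is assembled purely by \emph{selecting and pointing at} backbone-resident sorted sublists (one per relevant node), each selection costing $O(1)$ amortized via the fractional-cascading-style cross-links or via precomputed per-class lists in the backbone tree; verifying that the backbone tree can indeed maintain these per-class sorted lists under the $O(\log n)$-many incremental size-class changes, without blowing the $O(n\log\log n)$ backbone size bound, is the technical crux, and it is exactly where the "we do not contract a node even though it has only one child" design decision in the segment trees pays off, since it keeps the node set of each segment tree stable enough to address by pointer across merges.
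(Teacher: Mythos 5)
Your proposal never actually closes; you end by naming an unresolved ``main obstacle'' and sketching a speculative fix whose correctness you do not establish, so this is a gap, not a proof. More fundamentally, the obstacle you are fighting is imaginary. You set out to bound the work by $O(|\beta_2|\log\log n)$ (small-into-large splicing), but the lemma only asks for $O(|\beta|\log\log n)$, and a term of $\Theta(|\beta_1|)$ or $\Theta(|\beta_1|+|\beta_2|)$ is already within budget since $|\beta_1|\le|\beta|$. In the paper's application the two merged subsets always have equal size $2^i$, so $|\beta_2|=|\beta|/2$ and the refined charge would buy nothing anyway. Once you allow yourself linear work in $|\beta|$, the linear list merges you were trying to avoid in step (3) are perfectly acceptable, and the whole contortion around pointing into precomputed per-class backbone lists (which, as you yourself note, would themselves have to be produced by a merge somewhere) becomes unnecessary.

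The paper takes exactly the route you talk yourself out of: it does not splice the small tree into the large one, but instead collects the non-dummy nodes of $T_1$ and $T_2$ in pre-order, merges the two orders into one list in $O(|\beta_1|+|\beta_2|)$ time, reconstructs the base tree $T$ (with dummy nodes and the fan-out discipline) from that list, and then for each shared node does a plain linear merge of the sorted $\lseg(\cdot)$/$\rseg(\cdot)$ lists and a linear merge of the two segment trees on $\mseg(\cdot)$ followed by linear merges of their per-node lists. Everything is charged at $O(1)$ or $O(\log\log n)$ per edge of $\beta$, giving $O(|\beta|\log\log n)$. Your step (2) is a second soft spot: you assert that threading the $O(|\beta_2|)$ new nodes into $T_1$ costs $O(1)$ ``pointer surgery'' each, but you have not said how to locate the attachment point for each new node without walking an $O(\log n/\log\log n)$-length path, nor how to re-contract dummy chains that should now disappear; the paper's rebuild-from-a-merged-preorder-list sidesteps both issues. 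Finally, you do not address keeping the base-tree fan-out at $\log^\epsilon n$ after insertion, which the paper handles explicitly by inserting balanced blocks of dummy nodes.
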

\begin{proof}
	Let $T_1$ and $T_2$ be the contracted backbone trees 
	for $\beta_1$ and $\beta_2$, respectively. 
	Let $T$ be the contracted backbone tree for 
	$\beta$.
	Every non-dummy node $v$ of $T$ is a
	non-dummy node of $T_1$ or $T_2$ by definition. 
	We first compute the non-dummy nodes of $T$ and sort them
	in the order specified by
	the pre-order traversal of $T$. 
	To do this, we apply the pre-order traversal on $T_i$, and sort the nodes of $T_i$ in this order in $O(|\beta_i|)$ time for $i=1,2$.
	Then we merge two sorted lists in $O(|\beta_1|+|\beta_2|)$.
	Here, we can check for two nodes $v_1$ and $v_2$, 
	one from $T_1$ and one from $T_2$,
	if $v_1$ comes before $v_2$ in the sorted list for $T$ in constant time
	using $\region{v_1}$ and $\region{v_2}$. Recall that each non-dummy node of a contracted backbone tree points to its corresponding node in the backbone tree.
	Let $L$ be the merged list.
	
	Then we compute dummy nodes of $T$ and put them in $L$. 
	Using $\region{\cdot}$, we construct a tree $T'$ such
	that the pre-order traversal of $T'$ gives $L$ and $\region{u}$
	contains $\region{v}$ for an ancestor $u$ of a node $v$ in $T'$, 
	which is unique.
	Notice that a fan-out of each node of $T'$ is not necessarily
	at most $\log^{\epsilon} n$. 
	Thus we again consider each node of $T'$ one by one. If 
	a node $v$ of $T'$ has more than $\log^{\epsilon} n$ children, we insert
	dummy nodes as descendants of $v$ so that $v$ and
	the new dummy nodes have at most
	$\log^{\epsilon} n$ children. More specifically, we construct
	a balanced search tree $T'_v$ of fan-out $\log^\epsilon n$ on the node $v$ and its children $u_1,\ldots,u_k$ such that the root node is $v$ and each $u_i$ corresponds to a leaf node of $T'_v$. Then we replace the edges connecting $v$ and 
	its children in $T'$ with $T'_v$.
	The maximum fan-out of a node of $T'$ is $\log^{\epsilon} n$ and
	 the height of $T'$ is $O(\log n/\log\log n)$.
	 In this way, we have the base tree of the contracted backbone tree for $\beta$.

	Now we construct the secondary structure for a node $v$.
	Since a dummy node does not have a secondary structure, we assume that 
	$v$ is a non-dummy node of $T$. 
	Since $v$ is a non-dummy node in $T_i$ for $i=1,2$, 
	we have the pointer pointing to its corresponding node in the backbone tree. 
	If $v$ appears only one of  $T_1$ and $T_2$, we  
	simply copy the secondary structure of $v$ in $T_1$ or $T_2$.
	If $v$ appears in both $T_1$ and $T_2$,
	we are required to merge the secondary structures of $v$ in the backbone tree.
	
	Specifically, consider the secondary structure
	on $\lseg(v)$ (and $\rseg(v)$) for $T_i$.
	It is the sorted list of the edges in $\lseg(v)$ (and $\rseg(v)$) contained in
	$\beta_i$ 
	with respect to the $y$-order. 
	We can merge two sorted lists in time linear in the total size of the sorted lists. We merge them in the backbone tree, and copy the resulting list to the contracted backbone tree $T$. 
	The secondary structure on $\mseg(v)$ is a binary search of height $O(\log\log n)$ such that a sorted list 
	of edges with respect to the $y$-order is associated with each node. 
	As we did for the base tree,
	we merge two segment trees, one from $T_1$ and one from $T_2$, in time linear
	in the total complexity of the two segment trees. 
	Then we merge the sorted lists stored in each node of the segment tree,
	and then copy it to $T$.
	Therefore, we can update the backbone tree and 
	obtain all secondary structures of $T$
	in time linear in the total size, which is $O(|\beta|\log\log n)$.
\end{proof}

\paragraph{Static vertical ray shooting data structure.}
For each subset $\beta$ of $\bb$, we maintain a static data structure $\ds(\beta)$
 for answering a vertical ray shooting query in $O(\log n)$ time. 
 This is a variant of the vertical ray shooting data strucutre of Arge et al.~\cite{ABG-Improved-2006}. 
It is an interval tree with fan-out $f=\log^\epsilon n$ for an arbitrary fixed constant $0<\epsilon<1$ such that segment trees and priority
search trees are associated with each node as secondary structures.  
Here, we only provide the description of the data structure together with
the query algorithm. Its construction and merge procedure
will be described at the end of this subsection.

Its base tree is the contracted backbone tree for $\beta$.
We associate several secondary structures with each node of the base tree.
For a query point $q$, we walk along the search path $\pi$ of $q$ in the base tree. 
We consider secondary structures 
of each node on $\pi$. 
Every edge of $\beta$ intersecting the vertical line containing $q$ is stored
in a secondary structure of a node in $\pi$. 
We find the edge lying immediately above the query point among the edges 
of each of $\lseg(v)$,
$\rseg(v)$ and $\mseg(v)$ in $O(\log\log n)$ time for each node $v$ in $\pi$,
which leads to $O(\log n)$ query time.
Then we choose the one lying immediately above the query point.

Consider the pieces in $\mseg(v)$.
A segment tree on $\mseg(v)$ of height $O(\log\log n)$ is associated with $v$, 
and thus we can find the piece in $\mseg(v)$ lying immediately above a query point.
 Without fractional cascading, we are required to
spend $O(\log n)$ time on each node of the segment tree for applying binary search
on the sorted lists (assuming that the sorted lists are maintained in balanced binary search trees), which leads to
$O(\log^2 n)$ total query time on all $\mseg(\cdot)$'s along $\pi$.
To improve it, we use factional cascading so that a binary search
on each node of a segment tree can be done in constant time after the initial binary search performed once along $\pi$.
We show how to apply fractional cascading in this case at the end of the description of the whole structure of $\ds(\beta)$.
Then the query time on $\mseg(\cdot)$'s along $\pi$ is $O(\log n)$ in total.

Now consider the pieces in $\lseg(v)$. Recall that we store the part of an edge
lying inside $\region{v}$ in $\lseg(v)$. Thus 
the right endpoints of the pieces are on a common vertical line. 
Thus we can use a priority search tree on $\lseg(v)$ to find
the piece of $\lseg(v)$ lying immediately above a query point in $O(\log n)$ time, which leads
to the total query time of $O(\log^2n/\log\log n)$ on $\lseg(\cdot)$'s. To improve the query time on each node to $O(\log\log n)$,
we partition the pieces in $\lseg(v)$ into $O(|\lseg(v)|/\log^2 n)$ blocks 
with respect to their $y$-order such that each block
consists of $O(\log^2 n)$ pieces. We construct the priority search tree
on each block so that we can find the piece lying immediately above
a query point in $O(\log\log n)$ time among the pieces in each block. 
Also for each block, we find the piece with the leftmost left endpoint
and we call it the \emph{winner} of this block.
We store the winner of each block in the nodes of the path from $v$
to the leaf node corresponding to the left endpoint of the winner
(including $v$) in the base tree. When we store a winner to a node $u$,
we indeed store the part of the winner lying outside of $\region{u'}$
for the child $u'$ of $u$ such that
$\region{u'}$ contains the left endpoint of the winner. Let $\mathcal{L}_w(u)$ be
the set of pieces of the winners stored in a node $u$ in the base tree.
In this way, the endpoints of the pieces contained in $\winner(u)$ for 
a node $u$ has $O(\log\log n)$ distinct $x$-coordinates.
Then we construct the segment tree on $\winner(u)$  of height $O(\log\log n)$
for each node $u$ of the base tree.

In summary, we have two substructures with respect to $\lseg(\cdot)$ 
for each node $v$:
$O(|\lseg(v)|/\log^2 n)$ priority search trees for $\lseg(v)$ and one segment tree
of height $O(\log\log n)$ for all winners of the blocks of $\lseg(\cdot)$ for 
all nodes in the path from $v$ to the root of the base tree. 
Given a query point, we are to find the edge of $\cup_{v\in \pi} \lseg(v)$ 
lying immediately above the query point.
To do this, we search the segment trees for the winners associated with the nodes of $\pi$.
Then we find the winner $w$ lying immediately above $q$ among them
in $O(\log n)$ time in total
with fractional cascading.
Note that  the edge of $\cup_{v\in \pi} \lseg(v)$ lying immediately above $q$ 
is not necessarily a winner of some block. Thus we are also required to search
priority search trees. Here, it is sufficient to search the priority
search tree for only one block for each node of $\pi$.
To be specific, for each node $v$ of $\pi$, we find the winner $w_v$ lying immediately
above the right endpoint of $w$ if it is contained in $\region{v}$, or $w\cap \ell(v)$, otherwise,  among all winners of the blocks for $v$.
An edge lying above $q$ but below $w$ is 
contained in the block containing $w_v$ if such an edge exists in $\lseg(v)$.
Thus it is sufficient to search the priority search tree for this block.
In the construction of $\ds(\cdot)$, we compute $w_v$ 
 in advance 
for each winner $w$ and each node $v$ on the path from the root to the
leaf node corresponding to the left endpoint of $w$. Then we can
choose the priority search trees to search further in time linear in
the number of such priority search trees, which is $O(\log n/\log\log n)$.
In this way, we can find the edge of $\cup_{v\in \pi} \lseg(v)$ 
lying immediately above the query point in $O(\log n)$ time.

We can construct data structures of $\mathcal{R}(v)$ analogously.
Therefore, we have a static data structure supporting $O(\log n)$ query time.

\paragraph{Fractional cascading on $\mseg(\cdot)$.}
Fractional cascading is a technique that allows us to apply binary searches on lists associated with edges on a path of a graph efficiently~\cite{fractional}.
In our case, the underlying graph $G$ is a binary tree each of whose node
$v$ has a list $L(v)$.
Our goal is to find the predecessor of a query $q'$ in $L(v)$ for all nodes
$v$ in the path from the root to a given leaf node of $G$ in $O(p+\log N)$ time in total, where $p$ is the length of the path and $N$ is the total number of the elements in $L(\cdot)$'s.
Here, we use fractional cascading for the segment trees on
the edges of $\mseg(\cdot)$ and the segment trees on the winners of $\lseg(\cdot)$.
We show how to apply this for $\mseg(\cdot)$ only. We can do
this for $\lseg(\cdot)$ analogously.

Fractional cascading introduced by Chazelle and Guibas~\cite{fractional} works as follows.
We first assume that every element in $L(\cdot)$'s comes from an ordered set, for example, $\mathbb{R}$.
 Starting from the root $v$ of $G$, we choose every fourth element, and insert them to the lists of its children. We give a pointer to each element in $L(v)$ to the same element in $L(u)$ if it exists, for a node $v$ and its child $u$.
Also, we let each element in $L(u)$ point to
its predecessor among the elements in $L(u)$ which come from the list of its parent.
In this way, once we apply binary search on a leaf node, we can just follow
the pointers to compute the predecessor of $L(v)$ for every node $v$ in the 
path from the left node to the root. The number of the elements of 
the lists $L(\cdot)$ remains the same asymptotically.

In our case, the contracted backbone tree of $\beta$ has $\log^{\epsilon} n$ fan-out, and it is a two-level structure. Instead of considering the two-level tree,
we consider a binary search tree of height $O(\log n)$ obtained
by linking the secondary structures in a specific way. 
Imagine that we remove all edges of the base tree of the contracted backbone tree. We will connect a node $v$ of the base tree with the leaf $u$ of the secondary segment tree associated with its parent with $\region{u}\subseteq \region{v}$.
Even though $u$ and $v$ are nodes of different trees (the base tree and a segment tree), either $\region{u}\subseteq \region{v}$ or $\region{u}\cap \region{v}=\emptyset$ by construction.
In this way, the resulting graph $G$ forms a tree, but the degree of a node
might be more than two in the case that more than two children of a node $v$ of the base tree are connected to the same node in the segment tree associated with $v$.
In this case, we simply make this part be a balanced binary search tree by adding a set of dummy nodes of size linear in the degree of $v$ as we did
in Lemma~\ref{lem:const-contracted-tree} for making a tree have smaller fan-out.
Recall that the base tree of the contracted backbone tree has height $O(\log n/\log\log n)$ and the segment tree associated with a node of the base tree has
height $O(\log\log n)$.
Therefore $G$ is a tree of height $O(\log n)$ and of size $O(n)$.

Then the nodes in the base tree and the segment trees of the contracted backbone tree
we visited are in a single path $\pi'$ from the root to a leaf node of $G$. 
We can find the path in $O(\log n)$ time.
Recall that each node $v$ of $G$ has $\region{v}$ (defined on the segment tree from  which $v$ comes). Also, it has a list $L(v)$ of edges of $\beta$ crossing $\region{v}$ sorted with respect to their $y$-order.
We are to find the predecessor of a query point in $L(v)$ for all nodes $v$ in $\pi'$.
Here, one difficulty is that it is not possible to globally $y$-order all edges of $\mm$. However, using the fact that the edges of $L(v)$ for all nodes $v$ in a single path from the root to a leaf node can be globally $y$-ordered,
we can apply fractional cascading on $G$. In fact, several previous work~\cite{ABG-Improved-2006,BJM-Dynamic-1994,cn-locatoin-2015} uses this observation to apply fractional cascading on a segment tree.

Therefore, we have the following lemma. Here, notice that a query time on $\ds(\beta)$ for each subset $\beta\in\bb$ is $O(\log n)$. 
\begin{lemma}
	Given the data structure $\ds(\beta)$ for every subset $\beta\in\mathcal{B}$,
	we can find the edge lying immediately above a query point among the edges
	of a connected component $\gamma$ of $\lmm$ in $O(\log^2 n)$ time.
\end{lemma}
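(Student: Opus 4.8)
The plan is to exploit the decomposability of vertical ray shooting together with the logarithmic-size decomposition $\bb(\gamma)$ of the edge set of $\gamma$. Recall that $\bb(\gamma)$ partitions the edges of $\gamma$ into $O(\log n)$ subsets, each of which is one of the subsets in $\bb$, and that for every such subset $\beta$ we maintain the static vertical ray shooting structure $\ds(\beta)$. First I would, given a query point $q$, enumerate the subsets in $\bb(\gamma)$; since we keep the list $\bb(\gamma)$ explicitly alongside the data structures, this takes $O(\log n)$ time. For each $\beta\in\bb(\gamma)$ I would run the query algorithm on $\ds(\beta)$ described above, which returns the edge of $\beta$ lying immediately above $q$, or reports that no edge of $\beta$ lies above $q$.

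Next I would combine these partial answers. Because the union of the subsets in $\bb(\gamma)$ is exactly the edge set of $\gamma$, and because vertical ray shooting is decomposable --- the edge immediately above $q$ in $\mathcal{S}_1\cup\mathcal{S}_2$ is whichever of the two answers on $\mathcal{S}_1$ and $\mathcal{S}_2$ is lower --- I can fold the $O(\log n)$ partial answers together, spending $O(1)$ time per fold, to obtain the edge of $\gamma$ lying immediately above $q$. This is the desired edge.

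For the running time, there are $O(\log n)$ subsets in $\bb(\gamma)$, each query on $\ds(\beta)$ costs $O(\log n)$ time (the bound established in the preceding paragraphs, where the searches on $\mseg(\cdot)$ and on the winners of $\lseg(\cdot)$ along the search path are sped up by fractional cascading, and the $\lseg(\cdot)$ and $\rseg(\cdot)$ searches are handled by the blocked priority search trees), and the combination step costs $O(\log n)$ time in total. Hence the whole query costs $O(\log^2 n)$, as claimed.

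The one point that needs care --- and it is exactly what the development above is designed to discharge --- is ensuring that each $\ds(\beta)$ really answers a query in $O(\log n)$ time rather than in $O(\log^2 n/\log\log n)$; this is why the contracted backbone tree carries the height-$O(\log\log n)$ segment trees on $\mseg(\cdot)$ and on the block winners of $\lseg(\cdot)$, linked so that fractional cascading applies along the single path of $G$, and why $\lseg(v)$ is split into blocks of $\Theta(\log^2 n)$ edges so that the per-node cost drops to $O(\log\log n)$ over a base-tree path of length $O(\log n/\log\log n)$. Granting that per-structure bound, nothing beyond decomposability and the logarithmic bound on $|\bb(\gamma)|$ is needed to conclude.
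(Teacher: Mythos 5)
Your proposal is correct and matches the paper's own (implicit) argument: the paper also decomposes $\gamma$'s edges into the $O(\log n)$ subsets of $\bb(\gamma)$, queries each $\ds(\beta)$ in $O(\log n)$ time, and combines the answers by decomposability of vertical ray shooting, yielding $O(\log^2 n)$ total. You have also correctly located where the real work lies --- in getting the per-structure query down to $O(\log n)$ via the fan-out-$\log^\epsilon n$ contracted backbone tree, the height-$O(\log\log n)$ secondary segment trees, the blocked priority search trees on $\lseg(\cdot)$, and fractional cascading along a single root-to-leaf path --- which is exactly what the paper establishes in the paragraphs preceding the lemma.
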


\paragraph{An $O(|\beta|\log\log n)$-time merge operation on two static data structures.}
Suppose that we are given two static vertical ray shooting data structures
$\ds(\beta_1)$ and $\ds(\beta_2)$ for two subsets $\beta_1$ and $\beta_2$ of $\bb$. We are to construct $\ds(\beta)$ 
in $O(|\beta|\log\log n)$ time, where $\beta=\beta_1\cup\beta_2$. 
Since we can construct 
the contracted backbone tree for $\beta$ in $O(|\beta|\log\log n)$ time by Lemma~\ref{lem:const-contracted-tree}, the remaining task is 
to construct the secondary structures for $\mseg(v)$, $\lseg(v)$ and $\rseg(v)$ for each node $v$ in the base tree. 

For each node $v$, we first consider 
the secondary structure for $\mseg(v)$. In fact, we can answer
a vertical ray shooting query on $\mseg(v)$ using the segment tree associated with $v$ in the base tree of the contracted backbone tree. But it does not support fractional cascading. Thus we are required to 
construct the structure for fractional cascading in the base tree and all segment trees. As we mentioned above, we construct a binary tree $G$ of height $O(\log n)$ consisting all nodes of the base tree and all segment trees. This takes time linear in the size of $\ds(\beta)$, which is $O(|\beta|\log\log n)$.
Then we construct the structure for fractional cascading on $G$ 
 in time linear
in $G$~\cite{fractional}, which is $O(|\beta|\log \log n)$ in this case.

Now consider the secondary structure for $\lseg(v)$. We have
the sorted list of the edges
of $\lseg(v)$ with respect to 
their $y$-order.
We partition the edges of $\lseg(v)$ with respect to their $y$-order such that
each block contains $O(\log^2 n)$ edges in time linear in
the size of $\lseg(v)$. In the same time, we can obtain
the sort lists of the edges in each block with respect to 
their $y$-order.
We construct a priority search tree for each block in time linear
in its size using the sorted list.
Then we choose the winner of each block in time linear in the block size. Notice that there are $O(|\beta|/\log^2 n)$ winners in total for 
all nodes of the base tree of the contracted backbone tree.
We store each winner to the nodes in the path of the base tree
from the node defining it to the leaf corresponding to its left endpoint. And for each node $v$ of the base tree, we construct
a segment tree on the winners stored in $v$ in $O(k_v\log k_v)$,
where $k_v$ is the number of the winners stored in $v$.
Since the total number of winners is $O(|\beta|/\log^2 n)$ and the 
height of the base tree is $O(\log n/\log\log n)$, the sum
of $k_v$ for every node $v$ is $O(|\beta|/(\log n\log\log n))$.
Therefore, we can construct the segment trees on the winners
for each node in total $O(|\beta|/\log\log n)$ time.

We also compute the winner $w_v$ lying immediately above
$w$ (more precisely, above $\ell(v)\cap w$ or above the right endpoint of $w$) among all winners of the blocks for $v$ in advance 
for each winner $w$ and each node $v$ on the path from the root to the
leaf node corresponding to the left endpoint of $w'$. 
To do this, we compute the lower envelope of the winners of the blocks in the same node in $O(|\beta|/\log n)$ time in total for every node in the base tree.
Then for each winner $w$, we walk along the base tree from the leaf node corresponding to the left endpoint of $w$ to the root, and compute the winner
lying immediately above $w$ in the lower envelope stored in each such node. For each winner $w$, this takes $O(\log^2 n/\log\log n)$ time.
Since there are $O(|\beta|/\log^2 n)$ winners, the running time is $O(|\beta|/\log\log n)$ in total.

\begin{lemma}
	Given $\ds(\beta_1)$ and $\ds(\beta_2)$ for two subsets $\beta_1$ and $\beta_2$ of $\bb$, we can construct $\ds(\beta)$
	in $O(|\beta|\log\log n)$ time, where $\beta=\beta_1\cup\beta_2$.
\end{lemma}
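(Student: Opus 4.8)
The plan is to peel off the part already handled by Lemma~\ref{lem:const-contracted-tree} and then rebuild only the secondary and tertiary structures of $\ds(\beta)$ on top of the contracted backbone tree, making sure every rebuilding step runs in time linear in the quantity it touches so that the grand total stays $O(|\beta|\log\log n)$. First I would invoke Lemma~\ref{lem:const-contracted-tree} on $\ds(\beta_1)$ and $\ds(\beta_2)$ to update the global backbone tree and to obtain, in $O(|\beta|\log\log n)$ time, the base tree of the contracted backbone tree for $\beta$ together with, for every node $v$, the $y$-sorted lists of the edges of $\beta$ in $\lseg(v)$ and $\rseg(v)$ and the secondary segment tree on $\mseg(v)$ with a $y$-sorted list at each of its nodes. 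After this, the only things missing from $\ds(\beta)$ are (i) the fractional-cascading apparatus on the $\mseg(\cdot)$-segment trees and (ii) the priority search trees and winner machinery for $\lseg(\cdot)$ and $\rseg(\cdot)$.

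For (i), I would build the auxiliary binary tree $G$ described in the fractional-cascading paragraph above: delete the base-tree edges, hang each base-tree node $v$ under the leaf $u$ of the segment tree of $v$'s parent with $\region{u}\subseteq\region{v}$, and wherever a single segment-tree node acquires more than two children, splice in a balanced binary search tree of dummy nodes, exactly as in the proof of Lemma~\ref{lem:const-contracted-tree}. Since the base tree has height $O(\log n/\log\log n)$ and each associated segment tree has height $O(\log\log n)$, the tree $G$ has height $O(\log n)$ and size $O(|\beta|\log\log n)$, so it is assembled in that much time; the Chazelle--Guibas preprocessing~\cite{fractional} on $G$ is linear in $|G|$, which completes the $\mseg(\cdot)$ side, and the analogous tree for the segment trees on the winners of $\lseg(\cdot)$ and $\rseg(\cdot)$ is handled identically.

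For (ii), I would, for each node $v$ and in time linear in $|\lseg(v)\cap\beta|$, cut the $y$-sorted list of $\lseg(v)$ into $O(|\lseg(v)|/\log^2 n)$ blocks of $O(\log^2 n)$ edges, build one priority search tree per block directly from the already-sorted block (the linear-time construction noted earlier for priority search trees), and take the leftmost-left-endpoint edge of each block as its winner. Summed over $v$ this is $O(|\beta|)$, and it produces only $O(|\beta|/\log^2 n)$ winners in total. I then push each winner up its root-path in the base tree into the sets $\winner(\cdot)$ and, for each node $u$, build the height-$O(\log\log n)$ segment tree on $\winner(u)$ in $O(k_u\log k_u)$ time with $k_u=|\winner(u)|$; since the base tree has height $O(\log n/\log\log n)$, we get $\sum_u k_u=O(|\beta|/(\log n\log\log n))$, so this sums to $O(|\beta|/\log\log n)$. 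Finally I would precompute, for every winner $w$ and every node $v$ on the path to the leaf of its left endpoint, the winner $w_v$ of a block of $v$ lying immediately above $\ell(v)\cap w$ (or above the right endpoint of $w$): compute the lower envelope of the winners of the blocks at each node in $O(|\beta|/\log n)$ total time, then for each winner walk its root-path of length $O(\log n/\log\log n)$ doing one $O(\log n)$ query per node, which is $O(\log^2 n/\log\log n)$ per winner and hence $O(|\beta|/\log\log n)$ overall. The $\rseg(\cdot)$ structures are symmetric. Adding everything up, the bottleneck is the $O(|\beta|\log\log n)$ term from Lemma~\ref{lem:const-contracted-tree} and from building $G$, which gives the claimed bound.

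The hard part will be the bookkeeping that makes every sub-step genuinely linear: I must argue that the priority search trees and the various segment trees are built from lists that are \emph{already} sorted so that no extra logarithmic factor creeps in, that the two-level structure flattens into a single height-$O(\log n)$ binary tree $G$ on which fractional cascading applies verbatim --- using the fact that the edges along any root-to-leaf path of $G$ can be consistently $y$-ordered even though all edges of $\mm$ cannot --- and that the winner-related structures are small enough ($O(|\beta|/\log^2 n)$ winners, base-tree height $O(\log n/\log\log n)$) to absorb the $O(\log n)$-per-winner precomputation. None of this needs a new idea beyond what the preceding paragraphs set up; it is a matter of checking that the merge never touches an edge more than $O(\log\log n)$ times.
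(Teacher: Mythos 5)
Your proposal follows the paper's proof step for step: invoke the contracted-backbone-tree merge from Lemma~\ref{lem:const-contracted-tree}, rebuild the fractional-cascading tree $G$ of height $O(\log n)$ for the $\mseg(\cdot)$ segment trees, build blocks, priority search trees, winners, winner segment trees with the $\sum_v k_v\log k_v = O(|\beta|/\log\log n)$ bound, and precompute the $w_v$'s via per-node lower envelopes and a root-path walk per winner — all with the same cost accounting. This is essentially identical to the paper's argument.
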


\subsubsection{Update Procedure (without Rebalancing)}
In this subsection, we present an update procedure for $\locatecc(\cdot)$.
We have two update operations, the insertion of edges and vertices.
Here, we do not update $\locatecc(\cdot)$ in the case of the insertion of a vertex.
In other words, we treat edges whose union forms a line segment as one edge.
In this way, it is possible that 
the answer to a query on $\locatecc(\cdot)$ returns a line segment 
$\ell$ containing the edge
we want to find. To find the edge on $\ell$, we maintain the union of the edges for every such set of edges, and maintain the sequence of the edges on the union in order. Then after finding $\ell$, we apply binary search on the edges on $\ell$ to find the solution. 

Now suppose that we are given an edge $e$ and we are to update
$\locatecc(\cdot)$ accordingly. Specifically, we update
the static vertical ray shooting data structures for some
subsets of $\bb$ and the backbone tree.
For simplicity, we 
first assume that the edges
to be inserted are known in advance so that we can keep the base tree and segment
trees balanced. At the end of this subsection, we show how to get rid of 
this assumption and show how to rebalance the trees without increasing the update time.
Here, we use $\Pi$ to denote the subdivision of complexity $n$ \emph{before} $e$ is inserted.

We find the connected components of $\lmm$ incident to $e$ in $O(\log n)$ time. There are three cases: there is no such connected component,
there is only one such connected component, or 
there are two such connected components. 

\paragraph{Case 1. No connected component is incident to $e$.}
In this case, a new connected component consisting of only one edge $e$ appears.
We make a new subset $\beta$ consisting of only one edge $e$ and insert it to 
$\bb$.
Then we update the backbone tree by inserting $e$ as follows.

We find the list $\mseg(v)$ where $e$ is to be inserted for a node $v$ in the base in $O(\log n)$ time.
It is the 
lowest common ancestor of two leaf nodes corresponding to the
two endpoints of $e$. 
The node $v$ has at most $f$ children, say $u_1,\ldots,u_{f'}$. 
By construction, one endpoint is in $\region{u_j}$
and the other endpoint is in $\region{u_k}$ for two indices 
$1\leq j < k \leq f$.
We split $e$ into three pieces: $e\cap \region{u_j}$, $e\cap \region{u_k}$,
and the other piece in $O(\log\log n)$ time.

We first update the segment tree constructed on $\mseg(v)$ by inserting
the piece $e_m$ of $e$ lying outside of $\region{u_j}\cup\region{u_k}$. There are $O(\log\log n)$ nodes $w$ in the segment
tree such that $\region{w}\subseteq e_m$ and $\region{w'}\not\subseteq e_m$ for the parent $w'$ of $w$. 
We associate the list consisting of only one edge $e$ with each such node. Recall that $\beta$ consists
of $e$ only. In this way, we complete the update 
procedure for the segment tree on $\mseg(v)$ in $O(\log\log n)$ time.
Then we update the secondary structure of $\lseg(v)$. It consists of
the sorted list of the edges in each subset of $\bb$ with respect to their $y$-order. Since $\beta$ consists only one edge, we make the sorted list containing $e$ only, and associate it with $v$.
We also do this for $\rseg(v)$. In this way, we have the backbone tree
of the current subdivision after $e$ is inserted.

Then we construct $\ds(\beta)$ in $O(\log\log n)$ time. The base tree
of $\ds(\beta)$ consists of only three nodes: the root $v$ and its
children $u_j$ and $u_k$. The root $v$ has a segment tree
 storing only one edge, but the edge is stored in $O(\log\log n)$ nodes.
In other words, the segment tree associated with $v$ is a path consisting of $O(\log\log n)$ nodes.
For each of the two nodes $u_j$ and $u_k$ of the base tree, we construct a priority
search tree for $e$ of constant size. Then we are done.

\paragraph{Case 2. Only one connected component is incident to $e$.}
If there is only one connected component, say $\gamma$, we 
update $\ds(\gamma)$. Recall that we have $\bb(\gamma)$, which
is a set of $O(\log n)$
subsets of the edge set of $\gamma$ 
of distinct sizes. We make a new subset consisting only one
edge $e$ and add it to $\bb(\gamma)$. And we update the backbone tree
as we do in Case~1.

If there is another subset of $\bb(\gamma)$ of size $1$, we merge
them into a new subset of size $2$. Then we merge their static vertical
ray shooting data structures and update the backbone tree as well
 using Lemma~\ref{lem:const-contracted-tree}.
We do this until every subset of $\bb(\gamma)$ has distinct size. 
Then we are done.

\paragraph{Case 3. Two connected components are incident to $e$.}
If there are two such connected components, say $\gamma_1$ and $\gamma_2$, they are merged into one connected component together
with $e$.  If every subset in $\bb(\gamma_1)$ and $\bb(\gamma_2)$ 
has distinct size, we just collect every static vertical ray shooting
data structure constructed on a subset in $\bb(\gamma_1)\cup\bb(\gamma_2)$, and 
 we are done. 
If not, we first choose the largest subsets, one from $\bb(\gamma_1)$ and one from $\bb(\gamma_2)$, of the same size, say $2^i$. Then we merge them in the 
backbone tree and construct a new vertical ray shooting data structure on the union $\beta'$ of the two subsets in $O(2^{i+1}\log\log n)$ time. If there is a subset in $\bb(\gamma_1)$ or $\bb(\gamma_2)$ of size 
 $2^{i+1}$ other than $\beta'$, we again merge them together to form a subset of size $2^{i+2}$. We repeat this until every subset in $\bb(\gamma_1)$ and $\bb(\gamma_2)$ of size at least $2^i$ has distinct size.
 Then we consider the largest subsets, one from $\bb(\gamma_1)$ and 
 one from $\bb(\gamma_2)$, of the same size again. Note that the size of the two subsets is less than $2^i$.
 We merge them, and repeat the merge procedure.
 We do this for every pair of subsets in $\bb(\gamma_1)$ and $\bb(\gamma_2)$ of the same size. Finally, we have the set $\bb(\gamma_1\cup\gamma_2)$ of 
 of subsets of the edges of $\gamma_1\cup\gamma_2$ of distinct sizes, and the static vertical ray shooting data structure
 for each subset in $\bb(\gamma_1\cup\gamma_2)$.
 Then we insert $e$ to the data structure as we did in Case~2.

\begin{lemma}
	The total time for updating every vertical ray shooting data structure in the course of $n$ edge insertions is $O(n\log n\log\log n)$.
\end{lemma}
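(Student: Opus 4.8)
The plan is to use a charging argument in the style of the logarithmic method, where the unit cost of rebuilding is $O(\log\log n)$ per edge. First I would separate out the work that is \emph{not} a merge of two blocks. Inserting a fresh edge $e$ into the backbone tree and building $\ds(\cdot)$ on the new singleton subset (Cases~1 and~2) costs $O(\log n)$ time, dominated by locating the node $v$ of the base tree with the list $\mseg(v)$ receiving $e$; collecting the already-built static structures when two components whose blocks all have pairwise distinct sizes are united (Case~3) costs $O(\log n)$ time for the bookkeeping. Since there are at most $n$ edge insertions and at most $n-1$ component merges in the whole sequence, this part contributes $O(n\log n)$ in total. Vertex insertions trigger no update of $\locatecc(\cdot)$, so they contribute nothing. (Recall that in this subsection the edge sequence is assumed known in advance, so the base tree and the segment trees stay balanced; rebalancing is treated separately.)

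The remaining work consists entirely of merges of two subsets $\beta_1,\beta_2\in\bb$ of equal size into $\beta=\beta_1\cup\beta_2$ of twice that size, together with the induced update of the backbone tree. By Lemma~\ref{lem:const-contracted-tree} and the preceding lemma on merging static structures, each such merge runs in $O(|\beta|\log\log n)$ time, which I would amortize as a charge of $O(\log\log n)$ to each of the $|\beta|$ edges of $\beta$. The structural facts driving the bound are: in an incremental subdivision every block has a size that is a power of two; every merge fuses two blocks of size $2^{i}$ into one of size $2^{i+1}$; the size of the block containing a fixed edge never decreases, since components only merge and never split, so the blocks of $\bb(\cdot)$ only grow; hence a fixed edge $e$ lies in at most one block of size exactly $2^{j}$ that is ever produced by a merge, for each $j$ with $1\le j\le\lceil\log n\rceil$.

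Therefore each edge receives a charge of $O(\log\log n)$ at most once per level $j$, over $O(\log n)$ levels, i.e.\ a total charge of $O(\log n\log\log n)$; summing over the $n$ edges gives $O(n\log n\log\log n)$, which dominates the $O(n\log n)$ from the first paragraph and yields the lemma. I expect the one delicate point to be the accounting in Case~3: a single component merge can set off a cascade of block merges (just like an increment of a binary counter), so one must check that every block created during such a cascade is obtained by fusing two blocks of the \emph{same} size, so that each participating edge really does at least double the size of its block and the ``at most once per level'' invariant survives. Once that is verified against the description of the update procedure, the charging argument goes through unchanged.
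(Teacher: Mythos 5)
Your proof is correct and follows essentially the same charging argument as the paper: charge $O(\log\log n)$ to each edge per merge of two equal-size blocks, and observe that the block containing a fixed edge doubles in size each time it is charged, so an edge is charged $O(\log n)$ times. Your extra care about the non-merge bookkeeping and the Case~3 cascade is valid but does not change the route; the cascade in Case~3 is indeed defined so that every fusion combines two blocks of equal size, so your ``at most once per level'' invariant holds.
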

\begin{proof}
	Recall that the vertical ray shooting data structure $\ds(\cdot)$
	itself is static. 
	We charge the time for merging two subsets of size $2^i$ to
	the edges in the two subsets. Since this takes $O(2^{i+1}\log\log n)$ time, we charge each edge $\log\log n$ units.
	Then we show that each edge of $\Pi$ is charged at most $O(\log n)$ times in total in the course of $n$ edge insertions, which implies that the total update time is $O(n\log n\log\log n)$.
	
	Consider an edge $e$ in $\Pi$. When it is inserted to $\Pi$, it belongs to a subset of a connected component of $\Pi$ of size $1$. If it is charged once, the size of the subset which $e$ belongs to is doubled.
	Therefore, each edge of $\Pi$ is charged at most $O(\log n)$ times,
	and the total update time is $O(n\log n\log\log n)$. 
\end{proof}

\subsubsection{Rebalance Procedures for Trees}
Now we get rid of the assumption that the edges
to be inserted are known in advance. Thus we are required to
rebalance the trees we maintain for $\locatecc(\gamma)$.
We maintain the backbone tree and $\ds(\beta)$ for each subset $\beta$ of $\bb$.
Here, $\ds(\cdot)$ is a static structure obtained by the backbone tree, 
so it is balanced at any time. However, we are required to update it as the backbone tree is updated. This is because Lemma~\ref{lem:const-contracted-tree} requires
each non-dummy node of $\ds(\beta)$ to point to its corresponding node in the base tree.
In the following, we show how to rebalance the base tree of the backbone tree
and the segment trees associated with the nodes of the base tree.
Then we show how to update the pointer associated with each non-dummy node of the 
base tree of $\ds(\cdot)$.

\paragraph{Weight-balanced B-trees.}
We apply standard technique for keeping trees balanced using weight-balanced B-trees~\cite{weight-balance-Arge,weight-balance-Giora} with fan-out $f\geq 2$. It is a search tree with fan-out $f$  storing points at its leaves satisfying that all leaves are at the same distance from the root and $f^{h(v)}/2 \leq n(v)\leq f^{h(v)}$ for each node $v$,
where $h(v)$ is the height of $v$ and $n(v)$ is the size of the subtree
rooted at $v$. The height of a weight-balanced B-tree with fan-out  $f$ is $O(\log n/\log f)$.

When an element (point) is inserted to a weight-balanced
B-tree, we make a new leaf corresponding to the point and put it in an appropriate position. Then we consider each node $v$ in the path from 
the root to this leaf and check if it violates that 
$f^{h(v)}/2 \leq n(v)\leq f^{h(v)}$. If so, we split $v$ into two nodes $v_1$ and $v_2$. More specifically, we split the set of 
the children of $v$ into
two groups. We make the nodes in the first group be the children of $v_1$,
and the other nodes be the children of $v_2$. Then we make the parent of $v$
be the parent of $v_1$ and $v_2$. We also construct the secondary structure
associated with $v_1$ and $v_2$, and update the secondary structures associated
with the other nodes accordingly.
If we can apply this \emph{split} operation in $O(mA(m))$ time for each node,
where $m$ is the size of the subtree rooted in the node, 
the amortized time of inserting an element to the weight-balanced B-tree 
is $O(\log n+ A(n)\log n/\log\log n)$~\cite[Theorem 2.3]{weight-balance-Giora}.

\paragraph{Rebalancing the base tree.}

\begin{figure}
	\begin{center}
		\includegraphics[width=0.7\textwidth]{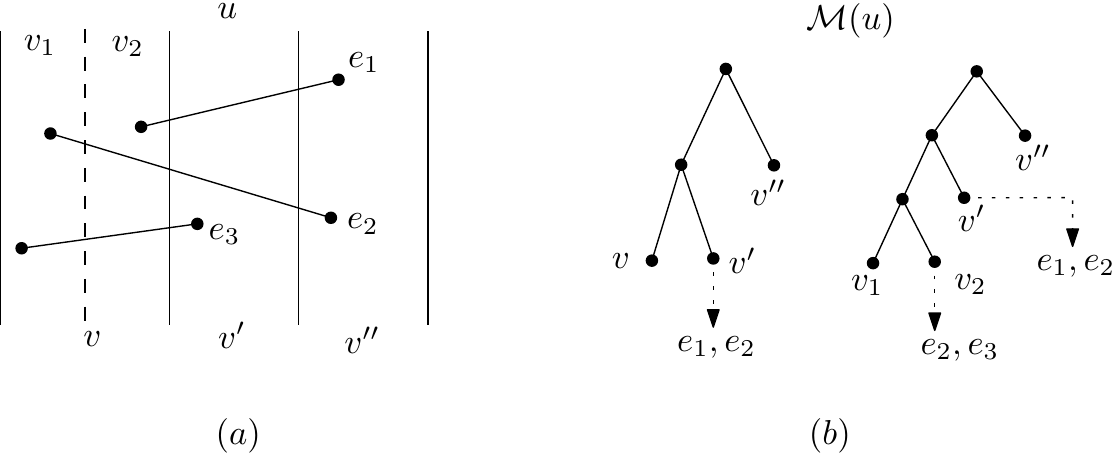}
		\caption{\small (a) Node $v$ is split into $v_1$ and $v_2$.
			Before the split of $v$, all three edges are in $\mathcal{L}(v)$.
			After the split of $v$, two of them, $e_2$ and $e_3$, are in
			$\mathcal{L}(v_1)$, and the other is in $\mathcal{L}(v_2)$.
			(b) Segment trees constructed on $\mathcal{M}(u)$, where $u$ is the parent of $v$. 
			Since $u$ gets two new children $v_1$ and $v_2$ instead of $v$,
			new two leaf nodes appear. The edges in $\mathcal{L}(v_1)$
			are associated with the new node corresponding to $v_2$.
			\label{fig:base-tree}}
	\end{center}
\end{figure}

In our problem, we maintain the base tree of the backbone tree
as a weight-balanced B-tree of fan-out $\log^\epsilon n$.
We show that we can split a node $v$ into two nodes $v_1$ and $v_2$ in 
 $O(m\log\log n)$ time, where $m$ is the size of the subtree rooted at the node.
The node $v$ has three lists of (pieces of) edges: $\mseg(v)$, $\lseg(v)$ and
$\rseg(v)$. All segments contained in $\mseg(v)$, $\lseg(v)$ or
$\rseg(v)$ have their endpoints in the leaf nodes of the subtree rooted at $v$. 
Thus $m$ is $O(|\mseg(v)|+|\lseg(v)|+|\rseg(v)|)$.
We compute the secondary structures in three steps.

First, we compute $\lseg(v_1)$ and $\lseg(v_2)$.
The structure on $\lseg(v)$ is a sorted list for each subset in $\bb$.
An edge that was in $\lseg(v)$ is contained in $\lseg(v_1)$ or $\lseg(v_2)$
after $v$ is split into $v_1$ and $v_2$. 
We split each sorted list into two sublists such that one sublist
consists of edges having their left endpoints in $\region{v_1}$ and the other sublist
consists of edges having their left endpoints in $\region{v_2}$. See Figure~\ref{fig:base-tree}(a). This takes 
time linear in the number of the edges of $\lseg(v)$.
This completes the computation of $\lseg(v_1)$ and $\lseg(v_2)$.
We also do this for $\rseg(v)$.

Second, we update $\mseg(u)$ for the parent $u$ of $v$.
If an edge $e$ that was in $\lseg(v)$ is assigned to $\lseg(v_1)$, the piece corresponding to $e$
stored in $\mseg(u)$ changes, where $u$ is the parent of $v$ (and thus the parent of $v_1$ and $v_2$.) 
See the edges $e_2$ and $e_3$ in Figure~\ref{fig:base-tree}(a).
More specifically, the piece of $e$ in $\region{v_2}$ is added to the piece of $e$ that were stored in $\mseg(u)$ before $v$ is split. 
 We update $\mseg(u)$ accordingly as follows. 
The structure on $\mseg(u)$ is a segment tree with associated 
sorted lists.  As $v$ is split into $v_1$ and $v_2$,
a leaf node in the segment tree is also split into two nodes.
See Figure~\ref{fig:base-tree}(b).
 But all edges, except the ones in $\lseg(v_1)$ (and in $\rseg(v_2)$), are 
still stored in the nodes where they are stored before $v$ is split. 
Moreover, by the construction of the segment tree,
the edges in $\lseg(v_1)$ are stored in the new leaf node corresponding 
to $v_2$, and in the nodes where they are stored before $v$ is split. 
 We also do this for $\rseg(v_2)$. 
Thus we can update $\mseg(u)$ in time linear in $O(|\lseg(v)|)$.

Third, we compute $\mseg(v_1)$ and $\mseg(v_2)$.
After the split of $v$, an edge of $\mseg(v)$ is in 
$\mseg(v_1)$ or $\mseg(v_2)$. 
See Figure~\ref{fig:base-tree-middle}(a). 
We construct
the segment trees on $\mseg(v_1)$ and $\mseg(v_2)$. 
They are the subtrees of the segment tree of $\mseg(v)$ rooted
at the children of the root node. Thus we just copy them for 
$\mseg(v_1)$ and $\mseg(v_2)$.
See Figure~\ref{fig:base-tree-middle}(b--c).
Thus it takes time linear in the
total size of them, which is $O(m\log\log n)$.

\begin{figure}
	\begin{center}
		\includegraphics[width=\textwidth]{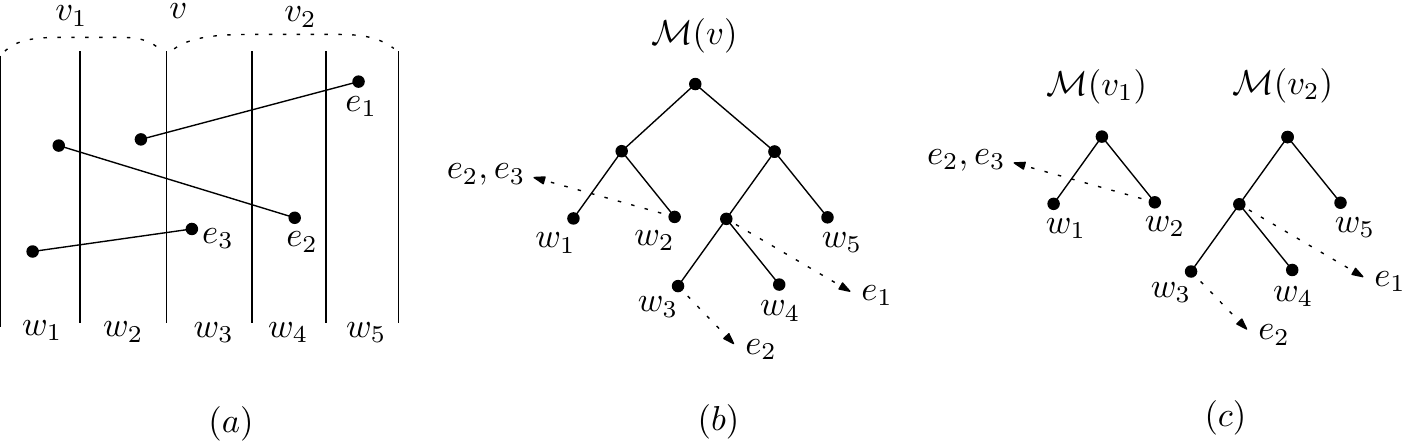}
		\caption{\small (a) After $v$ is split into $v_1$ and $v_2$,
			the edge $e_1$ belongs to $\mseg(v_2)$,
			the edge $e_2$ belongs to no $\mseg(\cdot)$,
			and the edge $e_3$ belongs to $\mseg(v_1)$.
			(b) To obtain the segment trees of $\mathcal{M}(v_1)$
			and $\mathcal{M}(v_2)$, it suffices to remove the root of the segment tree of $\mathcal{M}(v)$. 
			(c) The two resulting trees
			are the ones for $\mathcal{M}(v_1)$
			and $\mathcal{M}(v_2)$.
			\label{fig:base-tree-middle}}
	\end{center}
\end{figure}
\paragraph{Update the pointer associated with nodes of $\ds(\cdot)$.}
As a node $v$ in the base tree is split into $v_1$ and $v_2$,
we also update the static data structures $\ds(\beta)$ whose node
points to $v$. We
are required to do this because we need to maintain
the pointers for each non-dummy node of a contracted backbone tree
pointing to its corresponding node in the backbone tree. (Refer to
Lemma~\ref{lem:const-contracted-tree}.) 

We find the nodes in $\ds(\cdot)$ corresponding to $v$ in time linear in the
reported nodes. Let $\bb'$ be the set of subsets $\beta$ of $\bb$ such that
$\ds(\beta)$ has such a node. 
Let $\beta$ be a subset in $\bb'$. We split the node of $\bb'$ pointing to $v$  into two nodes which correspond to $v_1$ and $v_2$, respectively, as we did for the base tree.
The time for updating $\ds(\beta)$ is $O(m_\beta\log\log n)$, where 
$m_\beta$ denotes the number of the edges of $\beta$ stored in $\lseg(v)$, $\mseg(v)$ and $\rseg(v)$. Notice that the sum of $m_\beta$ for every 
subset $\beta$ of $\bb'$ is $O(m)$. 
Therefore, the total time of the split operation on the node pointing to 
$v$ in $\ds(\beta)$ for every $\beta\in\bb'$ 
is $O(m\log\log n)$ time, where $m$ is the size of the subtree rooted at the node.

\paragraph{Rebalancing the segment trees.}
We maintain a binary segment tree of the edges of $\mseg(v)$ for each node $v$ of the base tree as a weight-balanced B-tree of constant-bounded fan-out.
Contrast to the base tree (interval tree), the size of the subtree rooted at a node $u$ is not bounded
by the number of the edges stored in $u$ in this case. Here, we will see that 
a split operation
can be done in constant time, which leads to the total update time of $O(\log n\log\log n)$.

A segment tree on a set $\mathcal{I}$ of intervals on the $x$-axis 
of fan-out $f$ is defined as follows. Its base tree is a search tree of fan-out $f$  on the 
endpoints of the intervals of $\mathcal{I}$. An interval $e$ of $\mathcal{I}$ is stored in $O(\log n/\log f)$ nodes $u$ such that 
$\region{w}\subset e$ and $\region{u}\not\subset e$ for the parent $w$ of $u$.
 Notice that $e$ is stored in at most one node of the same depth. 
Let $w_1,\ldots, w_{f'}$ be the children of a node $u$ such that $\region{w_i}$ lies to the left of $\region{w_j}$ for $i\leq j$.
Each node $u$ has $f^2$ lists $L_{ij}$ of edges stored in $u$ for $1\leq i\leq f'$ and $i\leq j\leq f'$.
The list $L_{ij}$ consists of the edges stored in $u$ crossing
 $\region{w_i}\cup\cdots\cup \region{w_j}$. 
The total space complexity of the segment tree is $O(f^2n)$.

We show how to split a node $u$ into two nodes $u_1$ and $u_2$.
Let $w_1,\ldots, w_{f'}$ be the children of $u$. Let $f''=\lfloor f'/2\rfloor$. 
We make $w_i$ be a child of $u_1$ if $1\leq i\leq f''$, 
and make it a child of $u_2$, otherwise. 
Among ${f'}\choose{2}$ lists associated with $u$, the lists $L_{ij}$ are assigned to $u_1$
if $1\leq i\leq j\leq f''$,
and assigned to $u_2$ if $f''\leq i\leq j\leq f'$.
This can be done in $O(f^2)$ time in total, which is a constant time in our case.
Therefore, the split operation takes a constant time, and we have the following lemma.

\begin{lemma}
	We can maintain a data structure of size $O(n\log\log n)$
	in an incremental planar subdivision $\Pi$
	so that the edge of $\gamma$ 
	lying immediately above $q$ can be found in $O(\log^2 n)$ time
	for any edge $e$ and any connected component $\gamma$ of $\lmm$.
	The update time of this data structure is $O(\log n\log\log n)$.
\end{lemma}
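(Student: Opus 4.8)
The plan is to read off the lemma from the components assembled in this subsection: the global \emph{backbone tree}, the static vertical ray shooting structures $\ds(\beta)$ for every $\beta\in\bb$, the weight-balanced B-tree machinery used to rebalance them, and the auxiliary lists that, for each maximal run of collinear consecutive edges, store the union segment together with the sorted sequence of edges lying on it. First I would pin down the size. The backbone tree has size $O(n\log\log n)$, since each edge is kept in $O(1)$ nodes of the base tree and in $O(\log\log n)$ nodes of the secondary segment trees. For the static structures, the contracted backbone tree for $\beta$ has size $O(|\beta|\log\log n)$; within a fixed connected component $\gamma$ of $\lmm$ the subsets in $\bb(\gamma)$ partition the edge set of $\gamma$, so $\sum_{\beta\in\bb(\gamma)}|\beta|=|\gamma|$ and hence $\sum_{\beta\in\bb}|\beta|=O(n)$, giving $\sum_{\beta\in\bb}|\ds(\beta)|=O(n\log\log n)$. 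The rebalancing bookkeeping and the collinear-run lists add only $O(n)$, so the total size is $O(n\log\log n)$.

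Next, the query algorithm. Given a query point $q$ and a connected component $\gamma$, I would run a vertical ray shooting query on $\ds(\beta)$ for each of the $O(\log n)$ subsets $\beta\in\bb(\gamma)$ and keep the lowest answer; since vertical ray shooting is decomposable this correctly returns the edge of $\gamma$ immediately above $q$ (or the collinear run containing it, in which case one more binary search on the associated sorted list recovers the actual edge). Each $\ds(\beta)$ query costs $O(\log n)$: we walk the search path of $q$ in the contracted base tree, spending $O(\log\log n)$ per node on $\lseg(v)$ (via the $O(\log^2 n)$-size blocks with their priority search trees together with the segment trees on the winners $\winner(\cdot)$), on $\rseg(v)$ symmetrically, and on $\mseg(v)$ via the secondary segment tree, with fractional cascading removing the extra logarithmic factor across the $\mseg(\cdot)$'s and the $\winner(\cdot)$'s. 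Multiplying by the $O(\log n)$ subsets gives $O(\log^2 n)$ query time.

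For the update time I would separate two contributions. Vertex insertions leave $\locatecc(\cdot)$ untouched. For edge insertions, the three-case procedure creates a singleton subset and then repeatedly merges equal-size subsets via Lemma~\ref{lem:const-contracted-tree}; charging each $O(2^{i+1}\log\log n)$-time merge of two size-$2^i$ subsets by $\log\log n$ units to each of its edges, and noting that the subset containing a given edge doubles every time that edge is charged, each edge is charged $O(\log n)$ times, for a total of $O(n\log n\log\log n)$ over $n$ insertions. To drop the assumption that the insertion order is known, I would keep the base tree of the backbone tree as a weight-balanced B-tree of fan-out $\log^\epsilon n$ and each secondary segment tree as a weight-balanced B-tree of constant fan-out. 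A split of a base-tree node whose subtree has size $m$ is carried out in $O(m\log\log n)$ time --- rebuilding $\lseg(v),\rseg(v),\mseg(v)$ and the associated segment tree, patching $\mseg(\cdot)$ at the parent, and updating, over all $\beta\in\bb$, the $\ds(\beta)$-nodes pointing to it (the edges of all such $\beta$ stored there sum to $O(m)$) --- while a split of a segment-tree node is carried out in $O(1)$ time. Feeding $A(m)=O(\log\log n)$, resp.\ $A(m)=O(1)$, into the weight-balanced B-tree amortized bound~\cite[Theorem 2.3]{weight-balance-Giora} shows the rebalancing adds $O(\log n\log\log n)$ amortized time per insertion, and together with the merges this yields the claimed $O(\log n\log\log n)$ amortized update time.

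The step I expect to be the main obstacle is precisely this last one: keeping the single global backbone tree consistent, within budget, with everything that references it after a rebalancing split --- the base trees of all the $\ds(\beta)$'s (through their pointers to non-dummy backbone nodes), their secondary segment trees, and above all the fractional-cascading structures, which are sensitive to the exact shape of the tree. The quantitative fact that makes the accounting close is that the total number of edges of $\beta$, summed over all $\beta\in\bb$, stored in $\lseg(v)\cup\rseg(v)\cup\mseg(v)$ of the node $v$ being split is $O(m)$, so each split stays linear in its subtree size up to the $O(\log\log n)$ secondary-tree-height factor; combined with the weight-balanced B-tree amortization this confines the whole update to $O(\log n\log\log n)$ amortized time.
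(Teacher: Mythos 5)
Your proposal is correct and follows essentially the same line as the paper's own argument: the lemma is a summary of Section~3.2, and you reconstruct exactly the components the paper combines --- the $O(n\log\log n)$ backbone tree plus $\sum_{\beta\in\bb}O(|\beta|\log\log n)=O(n\log\log n)$ for the contracted/static structures (using that $\bb(\gamma)$ partitions the edges of $\gamma$), the $O(\log n)$-per-structure, $O(\log^2 n)$-total query over the $O(\log n)$ subsets of $\bb(\gamma)$, the $\log\log n$-charge per edge for merges doubling subset sizes, and the weight-balanced-B-tree rebalancing with $O(m\log\log n)$-time base-tree splits and $O(1)$-time secondary-segment-tree splits. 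The only slight imprecision is attributing $O(\log n\log\log n)$ amortized cost to the rebalancing itself; plugging $A(m)=O(\log\log n)$ into the cited bound $O(\log n + A(n)\log n/\log\log n)$ actually yields $O(\log n)$, with the extra $\log\log n$ factor coming from the merge step, but since your overall tally still lands at the claimed $O(\log n\log\log n)$ this does not affect the conclusion.
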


\section{Incremental Stabbing-Lowest Data Structure for Trapezoids}\label{sec:stabbing}
In this section, we are given a set $\mathcal{T}$ of trapezoids which is initially empty. Then we are to process the insertions of trapezoids to $\mathcal{T}$ so that
the lowest trapezoid in $\mathcal{T}$ stabbed by a query point can be found
efficiently.
We present a data structure of size $O(n\log n)$
supporting $O(\log^2n)$ query time
and $O(\log n\log\log n)$ insertion time.

\subsection{Data Structure}\label{sec:large-size}
The data structure we present in this subsection is an interval tree of fan-out $f=\log^\epsilon n$ 
of the upper and lower sides of the trapezoids of $\mathcal{T}$, where $\epsilon$ is an arbitrary fixed constant 
with $0<\epsilon <1$. 
Since the left and right sides of the trapezoids are parallel to the $y$-axis,
a node of the interval tree stores the upper side of a
trapezoid of $\mathcal{T}$ if and only if it stores the lower side
of the trapezoid.
Here, instead of storing the upper and lower sides of a trapezoid,
we store the trapezoid itself in such a node. 
In this way, a trapezoid $\Box$ of $\mathcal{T}$ is stored in at most three nodes of the interval tree: $\lseg(u), \rseg(u')$ and $\mseg(v)$ for two children $u$ and $u'$ of a node $v$. 
For details, refer to Section~\ref{sec:preliminary}.

\paragraph{Secondary structure for $\lseg(v)$ (and $\rseg(v)$).}
We first describe the secondary structure only for $\lseg(v)$ for a node $v$ of the base tree. The structure for $\rseg(v)$ can be defined and constructed analogously.
By construction, every trapezoid of $\lseg(v)$ intersects the 
right vertical line $\ell(v)$ on the boundary of $\region{v}$.
Thus, their upper and lower sides can be sorted in their $y$-order.
See Figure~\ref{fig:trapezoid-query}. 
Let $\mathcal{I}(v)$ be the set of the intersections of the trapezoids
of $\lseg(v)$ with $\ell(v)$. Note that it is a set of intervals of $\ell(v)$. 
We construct a binary segment tree $T(v)$ of $\mathcal{I}(v)$. 
A node $u$ of $T(v)$ corresponds to an interval $\region{u}$ contained in $\ell(v)$.
Every interval of $\mathcal{I}(v)$ stored in $u$
contains $\region{u}$. An interval $I\in \mathcal{I}(v)$ has its corresponding trapezoid $\Box$ in $\lseg(v)$ such that $\Box\cap \ell(v)= I$.
We let $I$ have the \emph{key} 
which is the $x$-coordinate of the left side of $\Box$.

\begin{figure}
	\begin{center}
		\includegraphics[width=0.5\textwidth]{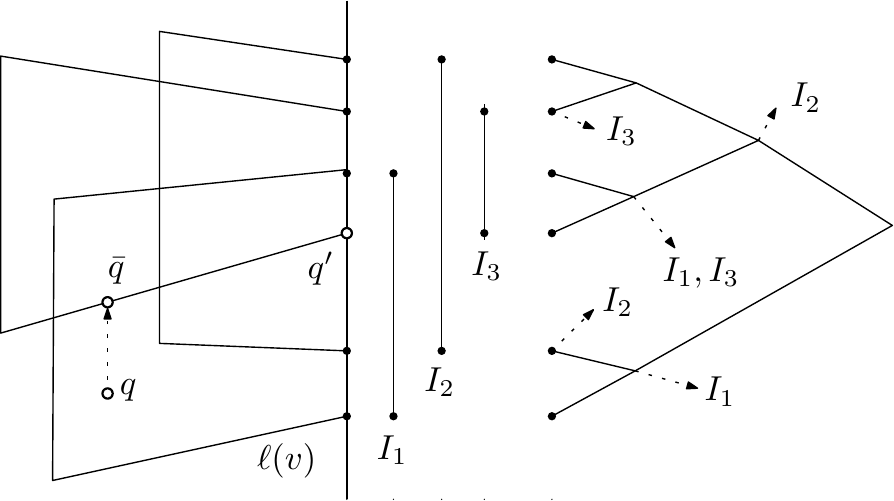}
		\caption{\small The segment tree constructed on
			the intersections of the trapezoids of $\mathcal{L}(v)$
			with $\ell(v)$.
			\label{fig:trapezoid-query}}
	\end{center}
\end{figure}

We construct an associated data structure
so that given a query value $x$ the interval 
with lowest upper endpoint can be found efficiently 
among the intervals stored in $u$ and having their keys less than $x$.
Imagine that we sort the intervals of $\mathcal{I}(v)$ stored in $u$ with respect to their keys,
and denote them by $\langle I_1, \ldots, I_k\rangle$. 
And we use $\Box_i\in\mathcal{T}$ to denote the trapezoid corresponding
to the interval $I_i$ (i.e., $\ell(v)\cap \Box_i = I_i$) for $i=1,\ldots,k$. 
The associated data structure is just a sublist of $\langle I_1,\ldots, I_k\rangle$. 
Specifically, suppose $x$ is at least the key of $I_i$ and at most
the key of $I_{i+1}$ for some $i$.
Then every interval in $\langle I_1,I_2,\ldots, I_{i}\rangle$ 
has its key at most $x$.
Thus the  answer to the query is the one with lowest upper endpoint among  $\langle I_1,I_2,\ldots, I_{i}\rangle$.
Using this observation, we construct a sublist of $\langle I_1,\ldots, I_{k}\rangle$ as follows. 
We choose the interval, say $I_i$, if its upper endpoint is the lowest among the upper endpoints of the intervals 
in $\langle I_1,\ldots I_i\rangle$.
We maintain the sublist consisting of the chosen intervals. 
Notice that the sublist has \emph{monotonicity} with respect to their upper endpoints. That is, the upper endpoint of $I_{i}$ lies lower than the
upper endpoint of $I_{i'}$ if $I_i$ comes before $I_{i'}$ in the sublist.
This property makes the update procedure efficient.

By applying binary search on the sublist with respect to the keys,
 we can find the interval
with lowest endpoint 
among the intervals stored in $u$ and having the keys less than $x$.
For each node of the base tree,
we maintain a structure for dynamic fractional cascading~\cite{Mehlhorn1990} on the segment tree so that 
the binary search on the sublist associated with each node of the segment tree can be done in $O(\log n\log\log n)$ time
in total.

A tricky problem here is that 
a query point $q$ and the upper or lower side of a trapezoid in $\lseg(v)$
cannot be ordered with respect to the $y$-axis in general. This happens if 
the left side of the trapezoid lies to the right of $q$. See Figure~\ref{fig:trapezoid-query}. This makes it difficult to follow
the path from the root to a leaf node in the segment tree associated with $\lseg(v)$. To resolve this, we will find the side $e$ lying immediately
above $q$ among the upper and lower sides of the
trapezoids in $\lseg(v)$, and then follow the path from the root
to the leaf corresponding to $q'=e\cap\ell(v)$.
In Section~\ref{sec:stabbing-query}, we will see why this gives the correct answer. 
To do this, we construct a vertical ray shooting data structure
on the upper and lower sides of the trapezoids in $\lseg(v)$. 
Since all of them intersect a common vertical line, we can use
a priority search tree as a vertical ray shooting data structure.

\paragraph{Secondary structure for $\mseg(v)$.}
Now we construct a secondary structure for $\mseg(v)$.
It is a binary segment tree on
the projections of the trapezoids of $\mseg(v)$ onto the $x$-axis.
Note that the height of the binary segment tree is $O(\log\log n)$. 
The trapezoids stored in each node $u$ of the segment tree 
intersect the vertical slab $\region{u}$, and have their left and right
sides lying outside of $\region{u}$.
Therefore, their upper and lower sides can be sored in $y$-order.
We consider the intersections (intervals) of the trapezoids stored in $u$
with the left side of $\region{u}$. 
We construct a data structure for answering stabbing-min queries
on such intervals whose keys are their upper endpoints. 
It has size of $O(N)$, and supports
$O(\log N)$ query time and $O(\log N)$ update time, where $N$ is the number of the intervals stored in the data structure~\cite{Agarwal-stabbing-max}.

Here we also have the issue similar to the one we encountered for $\lseg(v)$: a query point $q$ and an interval stored in a node $u$ of the segment tree 
cannot be ordered with respect to the $y$-axis.
 This makes it difficult to apply a stabbing-min query on the intervals.
 To resolve this, we will find the side $e$ lying immediately
above $q$ among the upper and lower sides of the
trapezoids stored in the node, and then apply a query with $q'=e\cap\ell(v)$.
To do this, we construct a balanced binary search tree
on the upper and lower sides of the trapezoids stored in $u$. 

\paragraph{Space complexity.}
Now we analyze the space complexity of the data structure.
The size of the base tree is $O(n)$. Each trapezoid of $\mathcal{T}$ is stored in
at most three nodes of the base tree. 

Consider the structures on $\lseg(v)$ for a node $v$ of the base tree.
Each trapezoid of $\lseg(v)$ is stored in $O(\log n)$ nodes in 
the secondary structure (segment tree) constructed on the intersections of the trapezoids of $\lseg(v)$ with $\ell(v)$.
The size of the list associated with each node $u$ of the segment tree is linear
in the number of the trapezoids stored in $u$. Thus the total size of the 
data structures associated with a node $v$ 
is $O(|\lseg(v)|\log |\lseg(v)|)$. Also, the priority search tree
has a size linear in the segments stored in the data structure.
Thus the size of the secondary and tertiary  structures
on $\lseg(\cdot)$ for all nodes of the base tree is $O(n\log n)$.
Similarly, we can show that the total size of the secondary and tertiary structures on $\rseg(\cdot)$ is 
$O(n\log n)$.

Now consider the structures on $\mseg(v)$ for a node $v$ of the base tree.
A trapezoid of $\mseg(v)$ is stored in $O(\log\log n)$ nodes in the segment tree
associated with $v$. Moreover, the tertiary structure on each node $u$ of the segment tree 
is linear in the number of the trapezoids stored in $u$. Thus the size of the secondary and tertiary  structures
on $\mseg(\cdot)$ for all nodes of the base tree is $O(n\log\log n)$,
and therefore the total size of the data structure is $O(n\log n)$.

\subsection{Query Algorithm}\label{sec:stabbing-query}
Using this data structure, we can find the lowest trapezoid in $\mathcal{T}$ stabbed by a query point $q$ as follows. 
We follow the base tree (interval tree of fan-out $\log^\epsilon n$) along the 
search path $\pi$ of $q$. There are $O(\log n/\log\log n)$ nodes in $\pi$. 
For each node $v$, we consider the secondary structures 
on $\lseg(v)$, $\mseg(v)$ and $\rseg(v)$, and 
we find the lowest trapezoid stabbed by $q$ among the trapezoids
in each of them.
And we return the lowest one among all trapezoids we obtained from the nodes of $\pi$. We spend $O(\log n\log\log n)$ time on each node in $\pi$,
which leads to the total query time of $O(\log^2 n)$.

\paragraph{Searching on $\lseg(v)$ (and $\rseg(v)$) for a node $v$.}
We have a binary segment tree on the intersections of the trapezoids of $\lseg(v)$
with $\ell(v)$ for a node $v$ in $\pi$. Also, we have a priority search tree for the upper and lower sides of the trapezoids of $\lseg(v)$.  
We first find the side $e$ immediately lying above $q$ among them 
in $O(\log n)$ time
using the priority search tree, and let $q'$ be the intersection
point between $e$ and $\ell(v)$. See Figure~\ref{fig:trapezoid-query}.
The following lemma is a key for the query algorithm for $\lseg(v)$.

\begin{lemma}
	The lowest trapezoid stabbed by $q$ is stored in a node
	in the search path of $q'$. 
\end{lemma}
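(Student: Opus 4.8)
The plan is to show that the lowest trapezoid $\Box^\star$ of $\mathcal T$ stabbed by $q$ whose upper side actually lies in $\lseg(v)$ — equivalently, whose intersection with $\ell(v)$ is an interval of $\mathcal I(v)$ containing a point reachable by the search in the segment tree — is stored in a node lying on the root-to-leaf search path $\pi(q')$, where $q'=e\cap\ell(v)$ and $e$ is the side lying immediately above $q$ among the upper and lower sides of the trapezoids of $\lseg(v)$. First I would recall that every trapezoid $\Box\in\lseg(v)$ meets $\ell(v)$ in a vertical interval $I_\Box = \Box\cap\ell(v)\in\mathcal I(v)$, and that in a segment tree on $\mathcal I(v)$ an interval $I$ is stored exactly in those nodes $u$ with $\region u\subseteq I$ but $\region{w}\not\subseteq I$ for the parent $w$; in particular $I$ is stored on $\pi(p)$ for any point $p\in I$, and only there. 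So the statement reduces to showing: if $\Box^\star$ is the lowest trapezoid stabbed by $q$ among those in $\lseg(v)$, then $q'\in I_{\Box^\star}$, i.e. the point $q'$ on $\ell(v)$ lies inside the vertical interval cut out by $\Box^\star$.

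The key geometric point is that $q'$ has the same $y$-coordinate as $e$ at $\ell(v)$, and $e$ is by definition the side lying immediately above $q$ among all upper/lower sides of trapezoids of $\lseg(v)$. I would argue as follows. Since $\Box^\star$ is stabbed by $q$, its upper side $u^\star$ lies above $q$ and its lower side $\ell^\star$ lies below $q$ (both at the $x$-coordinate of $q$); here $u^\star$ and $\ell^\star$ are both in $\lseg(v)$, so $u^\star$ is one of the candidates for $e$. Because $e$ is the \emph{lowest} side lying above $q$, we have $e$ at or below $u^\star$ over the relevant $x$-range. Now use the non-crossing hypothesis on upper/lower sides (stated in the subproblem definition: no two upper or lower sides cross), together with the fact that all these sides meet the vertical line $\ell(v)$: two sides that do not cross and both reach $\ell(v)$ are $y$-ordered consistently at every $x$-coordinate where both are defined, in particular at $\ell(v)$. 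Hence at $\ell(v)$, $e$ is still at or below $u^\star$, so $q'$ (which sits on $e$ at $\ell(v)$) is at or below the top endpoint of $I_{\Box^\star}$. Symmetrically, $e$ lies above $q$ hence above $\ell^\star$ at $q$'s $x$-coordinate, and again by non-crossing this ordering persists at $\ell(v)$, so $q'$ lies at or above the bottom endpoint of $I_{\Box^\star}$. Therefore $q'\in I_{\Box^\star}$, and by the segment-tree storage rule $\Box^\star$ is stored in some node on the search path $\pi(q')$.

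I would organize the write-up as: (i) recall the interval-$I_\Box$ notation and the segment-tree "stored on the search path of any interior point" fact; (ii) fix $\Box^\star$ the lowest stabbed trapezoid of $\lseg(v)$, note its upper and lower sides are candidates in the priority-search-tree query that produced $e$; (iii) prove $q'\in I_{\Box^\star}$ via the two-sided non-crossing argument above; (iv) conclude. I expect the main obstacle to be step (iii): making the "non-crossing sides that both hit $\ell(v)$ are $y$-ordered at $\ell(v)$ exactly as they are at $q$" argument fully rigorous, since the sides are only segments (they may terminate before reaching $q$'s $x$-coordinate or before $\ell(v)$). The care needed is that, for $\Box^\star\in\lseg(v)$, the left side of $\Box^\star$ lies in $\region v$ or to its left while the right side is at or beyond $\ell(v)$, so $u^\star$ and $\ell^\star$ are defined on the whole sub-slab from (at least) $q$'s $x$-coordinate rightward to $\ell(v)$ — this is exactly what lets the ordering propagate. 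Handling the degenerate case where $e$ coincides with $u^\star$ or $\ell^\star$ (so $q'$ is an endpoint of $I_{\Box^\star}$, still allowed since trapezoids are closed) is a minor point to check but causes no real trouble.
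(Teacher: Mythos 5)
Your proof is correct, but it takes a genuinely different route from the paper's. Both proofs reduce to the same key geometric fact---that a trapezoid of $\lseg(v)$ stabbed by $q$ must also contain $q'$---but they establish it by different means. The paper follows the two-leg path $q \to \bar q \to q'$ (where $\bar q$ is where the upward ray from $q$ first hits a side) and argues that neither leg crosses the boundary of the trapezoid, so $q'$ stays inside; this is a connectivity/path-following argument. You instead sandwich the side $e$ between the upper side $u^\star$ and lower side $\ell^\star$ of the stabbed trapezoid at $q$'s $x$-coordinate, then invoke the non-crossing hypothesis plus the fact that all three sides span the slab from $q$'s $x$-coordinate to $\ell(v)$ to propagate the $y$-ordering to $\ell(v)$, giving $q' \in I_{\Box^\star}$; this is an order-preservation argument. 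Both arguments use the same two facts (sides of $\lseg(v)$-pieces do not cross one another and all terminate on $\ell(v)$), and both actually prove the stronger statement that \emph{every} stabbed trapezoid of $\lseg(v)$ contains $q'$, not only the lowest one (your sandwich applies verbatim to any stabbed trapezoid, since you only use that its upper side lies above $q$ and its lower side below). The paper's proof is slightly more pictorial and handles the statement in one sweep; yours makes the role of the non-crossing hypothesis more explicit and is arguably easier to audit, at the cost of needing the domain-of-definition bookkeeping you rightly flagged. One small caution you should fold into the write-up: when the trapezoid's own upper (or lower) side happens to be $e$, the point $q'$ is an endpoint of $I_{\Box^\star}$, and you need trapezoids closed---which the paper stipulates---for $q' \in I_{\Box^\star}$ to hold; you noted this, and it is indeed the only degeneracy to check.
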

\begin{proof}
	Let $\bar{q}$ be the point on the upper or lower side of a trapezoid
	 where the vertical upward ray from $q$
	first hits, and $e$ be the 
	side containing both $\bar{q}$ and $q'$.
	By construction, every trapezoid containing $q'$
	is stored in a node in the search path $\pi_v$ of $q'$.
	
	We claim a trapezoid $\Box$ of $\lseg(v)$ containing $q$ also contains 
	$q'$. If this claim holds, every trapezoid of $\lseg(v)$ containing $q$
	is stored in a node of $\pi_v$, and thus the lemma holds.
	By construction, the line segment $q\bar{q}$ is not crossed by any
	side of a trapezoid of $\lseg(v)$. Thus $\Box$ contains $q\bar{q}$.
	Note that the segment $\bar{q} q'$ is contained in the lower or upper
	side of a trapezoid of $\lseg(v)$. Therefore, it is not crossed  by any
	upper or lower side of a trapezoid of $\lseg(v)$.  
	It is also contained in $\Box$.
	Therefore, $q'$ is contained in $\Box$.
	Therefore, the lemma holds.	
\end{proof}

By the lemma, it suffices to 
consider $O(\log n)$ nodes $w$ in the segment tree with $q'\in \region{w}$.
Then we find the successor of the $x$-coordinate of $q$ 
on the sublist associated with each such node.
By construction, the trapezoid corresponding to the successor is the lowest trapezoid
stabbed by $q$ among all trapezoids stored in $w$. 
Using (dynamic) fractional cascading, we can find it in $O(\log\log n)$ time 
for each node after spending $O(\log n)$ time for the initial binary search of only one node in the segment tree.
Thus we can find all successors in $O(\log n\log\log n)$ time.
We also do this for $\rseg(v)$ in $O(\log n\log\log n)$ time.

 \paragraph{Searching on $\mseg(v)$ for a node $v$.}
As the structure on $\mseg(v)$ for a node $v$ in $\pi$, we have a segment tree
constructed on $\mseg(v)$ with respect to the $x$-coordinates of their left and right
sides. Thus its height is $O(\log\log n)$.
We find $O(\log\log n)$ nodes $w$ of the segment tree with $q\in \region{w}$.
For each such node, we find the side $e$ lying immediately above $q$
in $O(\log n)$ time using the associated binary search tree.
Let $q'$ be the intersection point of $e$ with $\ell(v)$. Then
we apply a stabbing-min query with $q$ in $O(\log n)$ time.
By construction, the trapezoid corresponding the interval stabbed by $q'$ with lowest
key value is the lowest trapezoid stabbed by $q$.
This takes $O(\log n\log\log n)$ time in total for the node $v$.
Therefore, the total query time is $O(\log^2 n)$.

\begin{lemma}
	Using the data structure described in Section~\ref{sec:large-size},
	we can find the lowest trapezoid stabbed by a query point in $O(\log^2 n)$ time.
\end{lemma}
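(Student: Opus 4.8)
The plan is to verify that the query algorithm of Section~\ref{sec:stabbing-query} is correct and runs within the claimed bound; since the algorithm itself has already been spelled out, the proof amounts to assembling the pieces and checking the time accounting. First I would recall the skeleton: the base tree is an interval tree of fan-out $\log^\epsilon n$, so the search path $\pi$ of $q$ has $O(\log n/\log\log n)$ nodes, and every trapezoid of $\mathcal{T}$ whose $x$-projection contains the $x$-coordinate of $q$ — in particular every trapezoid stabbed by $q$ — is stored in $\lseg(v)$, $\mseg(v)$, or $\rseg(v)$ for some $v\in\pi$. Hence it suffices to compute, at each $v\in\pi$, the lowest trapezoid stabbed by $q$ among those stored in each of the three sets, and to return the lowest over all $3\cdot|\pi|$ candidates; this is valid because the stabbing-lowest answer is obtained by taking a minimum, which is decomposable.

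Next I would treat the three secondary structures at a fixed node $v$, arguing correctness and an $O(\log n\log\log n)$ cost for each. For $\lseg(v)$ the subtle point is that $q$ need not be $y$-comparable with the sides of a trapezoid of $\lseg(v)$ whose left side lies to the right of $q$, so one first uses the priority search tree on the upper and lower sides of the trapezoids of $\lseg(v)$ to find the side $e$ lying immediately above $q$, and sets $q'=e\cap\ell(v)$; by the lemma just proved, the lowest trapezoid stabbed by $q$ in $\lseg(v)$ is stored at a node on the search path $\pi_v$ of $q'$ in the segment tree $T(v)$. Along $\pi_v$ one performs, at each of the $O(\log n)$ nodes, a search of the monotone sublist with the key value equal to the $x$-coordinate of $q$; by the construction of the sublist the returned interval corresponds exactly to the lowest trapezoid stabbed by $q$ among those stored at that node. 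A single initial binary search costs $O(\log n)$, and dynamic fractional cascading on $T(v)$ makes each subsequent search cost $O(\log\log n)$, for a total of $O(\log n\log\log n)$; $\rseg(v)$ is symmetric.

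For $\mseg(v)$ the associated segment tree has height $O(\log\log n)$, so there are only $O(\log\log n)$ nodes $w$ with $q\in\region{w}$. At each such $w$ one again resolves the ordering issue by locating, in the balanced binary search tree on the sides stored at $w$, the side immediately above $q$, which yields a point $q'$ that is $y$-comparable with all those sides; then one runs a stabbing-min query with $q'$ on the intervals stored at $w$, whose keys are the $x$-coordinates of the left sides, and the stabbed interval of smallest key corresponds to the lowest trapezoid stabbed by $q$ there. Each of the two steps costs $O(\log n)$, giving $O(\log n\log\log n)$ per node $v$. Summing the cost $O(\log n\log\log n)$ over the $O(\log n/\log\log n)$ nodes of $\pi$ yields the claimed $O(\log^2 n)$ query time.

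I expect the only real obstacle to be the correctness of the two ``detour via $q'$'' arguments — at $\lseg(v)$ for descending $T(v)$ and at $\mseg(v)$ for feeding the stabbing-min structure — since $q$ itself is not $y$-comparable with all the relevant sides; the preceding lemma, together with the priority-search-tree and binary-search-tree auxiliary structures, is precisely what guarantees that replacing $q$ by $q'$ does not miss the true answer. The rest is a routine bookkeeping of the per-node $O(\log n\log\log n)$ bound against the $O(\log n/\log\log n)$ path length.
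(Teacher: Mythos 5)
Your proposal is correct and follows essentially the same route as the paper: walk the $O(\log n/\log\log n)$-length search path of the fan-out-$\log^\epsilon n$ interval tree, use the priority search tree (resp.\ the per-node balanced BST) to compute the surrogate point $q'$ that restores $y$-comparability before descending the $\lseg$/$\rseg$ segment tree (resp.\ issuing the stabbing-min query on $\mseg$), invoke the preceding lemma to justify searching only along the path of $q'$, and account $O(\log n\log\log n)$ per base-tree node via fractional cascading. Nothing essential is missing.
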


\subsection{Update Algorithm}\label{sec:stabbing-update}
We assume that the trapezoids
to be inserted are known in advance so that we can keep the base tree 
and all segment trees balanced.
We can get rid of this assumption using weight-balanced B-trees as we did
in Section~\ref{sec:D2}.
Let $\Box$ be a trapezoid to be inserted to the data structure.
We find the node $v$ of maximum depth
in the base tree such that $\region{v}$ contains $\Box$
in $O(\log n)$ time. Then we find two children $u$ and $u'$ of $v$ such that
$\region{u}$ contains the left side of $\Box$ and $\region{u'}$ contains the right
side of $\Box$.
The trapezoid $\Box$ is to be stored only in these nodes in the base tree.

\paragraph{Update the segment tree on $\lseg(u)$ (and $\rseg(u')$).}
We update the secondary structure (segment tree) for $\lseg(u)$ by inserting $\Box$.
We find the set $W$ of $O(\log n)$ nodes $w$ in the segment tree such that the union of $\region{w}$'s
contains $\Box\cap \ell(u)$ and $\region{w}$'s are pairwise interior disjoint. Recall that the segment tree is constructed in 
the interval $\Box'\cap \ell(u)$
for every trapezoid $\Box'$ in $\lseg(u)$.
Each such node $w$ is associated with a sorted list $L(w)$ of intervals stored in $w$. We decide if we store $\Box\cap \ell(u)$ in $L(w)$. To do this, we find the position for $\Box$ in $L(w)$
by applying binary search on $L(w)$ with
respect to the key (the $x$-coordinate of the left side of $\Box$) of $\Box$. Here we do this for every node in $W$, and thus we can apply
binary searches more efficiently using fractional cascading.
The key of each interval in the sorted lists is a real number, which
is the $x$-coordinate of the left side of a trapezoid.
Thus we can apply (dynamic) fractional cascading so  that
each binary search takes $O(\log\log n)$ time after spending $O(\log n)$
time on the initial binary search on a node of $W$~\cite{Mehlhorn1990}.

Let $\langle I_1,\ldots, I_k\rangle$ be the sorted list of the intervals stored in $w$. The list $L(w)$ is a sublist of this list, say
$\langle I_{i_1},\ldots,I_{i_t}\rangle$. Let $I_{i_j}$ be the 
predecessor of $\Box\cap \ell(u)$. 
We determine if $\Box$ is inserted to the list 
in constant time: if the upper side of $\Box$ lies below the upper side of the trapezoid $\Box_{i_{j+1}}$ with $\Box_{i_{j+1}}=I_{i_{j+1}}\cap \ell(u)$,
we insert $\Box\cap\ell(u)$ to the list. 
Otherwise, the list stored in $u$ remains the same.
If we insert $\Box\cap\ell(u)$ to the list,
we check if it violates the monotonicity of $L(w)$.
We consider the trapezoid $\Box'$ whose corresponding interval lies 
before $\Box$ one by one from $\Box_{i_j}$. If 
the upper side of $\Box'$ lies above $\Box$, we remove $\Box'$ from the list.
Each insertion into and deletion from $L(w)$ takes $O(\log\log n)$ time~\cite{Mehlhorn1990}. 
We do this until the upper side of $\Box'$ lies below the upper side of  $\Box$.
The total update time for the insertion of $\Box$ is $O(\log n+ N\log\log n)$,
where $N$ is the number of the trapezoids deleted due to $\Box$.
Notice that a trapezoid (more precisely, its corresponding interval)  appears in $O(\log n)$ lists associated with nodes in the
segment trees. Also, each of them is deleted by at most once. 
Therefore, the sum of $N$ over all $n$ insertions is $O(n\log n)$.
Thus the amortized time for inserting $\Box$ to $\lseg(u)$ is $O(\log n\log\log n)$ time.
We do this for $\lseg(u')$ in $O(\log n\log\log n)$ time analogously.

\paragraph{Update the segment tree on $\mseg(v)$.}
Now we insert $\Box$ to the data structure for $\mseg(v)$. By the assumption we made at the beginning of this subsection, 
the segment tree on $\mseg(v)$ is balanced. 
Thus it suffices to find $O(\log\log n)$ nodes $w$ in the segment tree 
such that the union of $\region{w}$'s
contains $\Box$ and $\region{w}$'s are pairwise interior disjoint. 
For each such node $w$, we insert the intersection of $\Box$ with the left side  $\region{w}$ to the stabbing-min data structure associated with $w$,
which takes $O(\log n)$ time.
Therefore, the update time for $\mseg(v)$ is $O(\log n\log\log n)$,
and the total update time is also $O(\log n\log\log n)$.

\begin{lemma}
	We can maintain an $O(n\log n)$-size data structure on an incremental set of $n$ trapezoids supporting $O(\log n\log\log n)$ amortized update time 
	so that given a query point $q$, the lowest trapezoid stabbed by $q$ can be computed in $O(\log^2 n)$ time.
\end{lemma}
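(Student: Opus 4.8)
The plan is to assemble the three ingredients already developed in Sections~\ref{sec:large-size}--\ref{sec:stabbing-update}. First I would invoke the data structure of Section~\ref{sec:large-size}: an interval tree of fan-out $f=\log^{\eps} n$ on the upper and lower sides of the trapezoids of $\mathcal{T}$, where each trapezoid is stored in at most three nodes $\lseg(u),\rseg(u'),\mseg(v)$. At each node $v$ we keep the secondary structures described there: for $\lseg(v)$ (and symmetrically $\rseg(v)$), a binary segment tree on the intersections with $\ell(v)$ whose nodes carry the \emph{monotone} sublists, together with a priority search tree on the bounding sides and a dynamic fractional cascading structure; for $\mseg(v)$, a binary segment tree of height $O(\log\log n)$ whose nodes carry a stabbing-min structure and a balanced search tree. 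The space bound is exactly the space-complexity analysis at the end of Section~\ref{sec:large-size}: the $\lseg(\cdot)$ and $\rseg(\cdot)$ structures sum to $O(n\log n)$ and the $\mseg(\cdot)$ structures sum to $O(n\log\log n)$, so the total is $O(n\log n)$.

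Second, the query bound is the query algorithm of Section~\ref{sec:stabbing-query}. We walk the $O(\log n/\log\log n)$ nodes of the search path $\pi$ of $q$ in the base tree and spend $O(\log n\log\log n)$ time per node: for $\lseg(v)$ and $\rseg(v)$, one $O(\log n)$ priority-search-tree shot to locate the point $q'$ on the side immediately above $q$, one $O(\log n)$ initial binary search, and then $O(\log\log n)$ time at each of the $O(\log n)$ segment-tree nodes on the search path of $q'$ via fractional cascading; for $\mseg(v)$, $O(\log\log n)$ segment-tree nodes each charged $O(\log n)$ for a balanced-tree search plus a stabbing-min query. Correctness follows from the lemma showing the lowest trapezoid stabbed by $q$ is stored on the search path of $q'$, together with the fact that the monotone sublists return the stabbed interval of smallest key. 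Summing over $\pi$ gives the $O(\log^2 n)$ query time.

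Third, for the update bound I would first run the insertion algorithm of Section~\ref{sec:stabbing-update} under the assumption that the trapezoids are known in advance, so that the base tree and all segment trees stay balanced. Locating the three target nodes costs $O(\log n)$; updating the segment tree on $\lseg(u)$ costs $O(\log n+N\log\log n)$ amortized, where $N$ counts the trapezoids evicted to restore monotonicity, and since a trapezoid occupies $O(\log n)$ sublists and is evicted at most once we get $\sum N=O(n\log n)$ and hence $O(\log n\log\log n)$ amortized; updating $\mseg(v)$ touches $O(\log\log n)$ segment-tree nodes at $O(\log n)$ each. The final step --- and the one I expect to be the main obstacle --- is removing the ``known in advance'' assumption, as in Section~\ref{sec:D2}: replace the base interval tree by a weight-balanced B-tree of fan-out $\log^{\eps} n$ and each $\mseg(\cdot)$ segment tree by a weight-balanced B-tree of constant fan-out, and bound the cost of a node split. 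A split of a base-tree node whose subtree has size $m$ must re-split the $y$-sorted lists and the monotone sublists of $\lseg(v)$ and $\rseg(v)$, repair $\mseg(u)$ at the parent $u$, copy the two subtrees that become the $\mseg(v_1),\mseg(v_2)$ segment trees, and rebuild the dynamic fractional cascading structure; this is doable in $O(m\log\log n)$ time, and a segment-tree node split is $O(1)$. Plugging these split costs into the weight-balanced B-tree amortization (the cited bound $O(\log n+A(n)\log n/\log\log n)$ with $A(n)=\log\log n$) shows that rebalancing is subsumed by the $O(\log n\log\log n)$ amortized update time, completing the proof.
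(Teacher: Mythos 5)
Your proposal matches the paper's own treatment of this lemma essentially point for point: Section~\ref{sec:large-size} gives the interval tree of fan-out $\log^\epsilon n$ with the segment-tree/monotone-sublist/priority-search-tree secondary structures for $\lseg(\cdot),\rseg(\cdot)$ and the segment-tree/stabbing-min secondary structures for $\mseg(\cdot)$, Section~\ref{sec:stabbing-query} gives the per-node $O(\log n\log\log n)$ query and the $q'$ reduction lemma, and Section~\ref{sec:stabbing-update} gives the $O(\log n + N\log\log n)$ list maintenance with the $\sum N = O(n\log n)$ eviction bound. Your identification of the space charges ($O(n\log n)$ from $\lseg/\rseg$, $O(n\log\log n)$ from $\mseg$) and the amortization are correct.

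One remark on the last step. You --- like the paper, which says only ``we can get rid of this assumption using weight-balanced B-trees as we did in Section~\ref{sec:D2}'' --- assert that a base-tree node split costs $O(m\log\log n)$. In Section~\ref{sec:D2} this is justified because the $\lseg(\cdot)$ secondary structures there are flat sorted lists plus winner segment trees of height $O(\log\log n)$. Here, however, the $\lseg(v)$ secondary structure is a full binary segment tree of height $\Theta(\log |\lseg(v)|)$ with sublists at every node and a dynamic fractional cascading overlay; after a split the crossing line $\ell(v_1)$ changes, so that structure must be rebuilt from scratch, which is $\Theta(|\lseg(v)|\log|\lseg(v)|)$, i.e.\ $A(m)=\Theta(\log m)$ rather than $O(\log\log n)$. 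Plugged into the weight-balanced bound this gives $O(\log^2 n/\log\log n)$ amortized, which would dominate $O(\log n\log\log n)$. This is a gap in your write-up, but it is exactly the same gap the paper elides by citing Section~\ref{sec:D2} without checking that the secondary structures are comparable, so I would not count it against you so much as flag it as a place where both you and the paper owe the reader a more careful argument (e.g.\ a lazier rebuild or a different secondary structure for $\lseg(v)$).
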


\bibliography{paper}
 
\end{document}